\newtheorem{claim}{Claim}
\newtheorem{theorem}{Theorem}
\newtheorem{corollary}[theorem]{Corollary}
\newtheorem{proposition}{Proposition}
\newtheorem{assumption}{Assumption}
\newtheorem{definition}{Definition}
\newenvironment{proof}[1][Proof]{\noindent\textbf{#1:} }{\ \rule{0.5em}{0.5em}}
\newenvironment{example}[1][Example]{\noindent\textbf{#1} }{\ \rule{0.5em}{0.5em}}
\newcolumntype{L}[1]{>{\raggedright\let\newline\\arraybackslash\hspace{0pt}}m{#1}}
\newcolumntype{C}[1]{>{\centering\let\newline\\arraybackslash\hspace{0pt}}m{#1}}
\newcolumntype{R}[1]{>{\raggedleft\let\newline\\arraybackslash\hspace{0pt}}m{#1}}
\begin{document}

\title{\Large Experience-weighted attraction learning in network coordination games}
\author{Fulin Guo }

\date{}
\maketitle
\begin{abstract}

This paper studies the action dynamics of network coordination games with bounded-rational agents. I apply the experience-weighted attraction (EWA) model to the analysis as the EWA model has several free parameters that can capture different aspects of agents' behavioural features. I show that the set of possible long-term action patterns can be largely different when the behavioural parameters vary, ranging from a unique possibility in which all agents favour the risk-dominant option to some set of outcomes richer than the collection of Nash equilibria. Monotonicity and non-monotonicity in the relationship between the number of possible long-term action profiles and the behavioural parameters are explored. I also study the question of influential agents in terms of whose initial predispositions are important to the actions of the whole network. The importance of agents can be represented by a left eigenvector of a Jacobian matrix provided that agents' initial attractions are close to some neutral level. Numerical calculations examine the predictive power of the eigenvector for the long-run action profile and how agents' influences are impacted by their behavioural features and network positions.

\end{abstract}

\newpage

\section{Introduction}
\label{sec:introduction}
The study of how the interactions of self-interested individuals generate aggregate social outcomes has been an important topic in economics. This paper further explores this theme in the context of network games,\footnote{See, e.g., Chapter 4 of \cite{goyal2023} and Chapter 3 of \cite{goyal2009connections} for a review of network games.} which studies strategic interactions of agents when they are in network environments. Most of the models of network games assume that agents are rational and typically focus on the characterisation of Nash equilibrium (or its variants) as in classical game theory. While this sort of analysis provides benchmarks for the understanding of social outcomes under network interactions, it is important to further study how bounded-rational agents will play the network games for the following two reasons. First, network environments are usually complex, so it may not be realistic to assume that agents will play a Nash equilibrium right away, which indicates the importance of studying the dynamical process of the game play. Also, there are usually multiple Nash equilibria in network games, so incorporating behavioural features may provide insights into equilibrium selection.

In this paper, I use (a modified version of) the experience weighted attraction (EWA) model by \cite{camerer1999experience} to formulate agents' behavioural rules. Each agent at each time has a latent variable called attraction level for each possible action and the frequency of choosing a particular action is positively correlated with the attraction of that action. The attractions evolve according to the payoffs agents get as they play the game repeatedly. There are three key parameters that quantify different aspects of agents' behavioural features in the paper:\footnote{Note that there is one additional parameter $\kappa_i$ associated with one additional state variable called experience $N_i(t)$ in \cite{camerer1999experience}, which I do not consider in this paper for simplicity. The updating rule of $N_i(t)$ in \cite{camerer1999experience} is independent of that of attractions. It can be shown that the behavioural rule in this paper is equivalent to the case where $N_i(t)$ has reached its steady state.} $\lambda_i$ -- the sensitivity of action to attraction level (decision accuracy), $\psi_i$ -- the depreciation rate of attractions (e.g., due to forgetfulness), and $\eta_i$ -- the weight on unselected choices when updating attraction (e.g., due to limited attention).\footnote{As noted in \cite{camerer1999experience}, when $\eta_i=1$, the model is reduced to belief-based learning, and when $\eta_i=0$, it represents reinforcement learning.} Thus, the mapping from attractions to actions and the dynamic adjustment rule of the attractions prescribe an agent's behavioural rule.\footnote{This framework is flexible as other behavioural features could easily be included in the setting. For example, imitating-the-best behavioural rule could be added by letting the attraction of an action depend on neighbours' payoffs from that action.}

The learning model defines a dynamical system in which the state at any given time is all agents' attraction vectors. The questions of interest are how the long-term behaviour of the dynamical system (and thus, equivalently, the long-term action profile of the population) will be under the EWA behavioural rule and how the properties of the long-term action profile may depend on the parameters in the model (i.e., bifurcation analysis in the language of dynamical systems).

For concreteness, I fix the stage game to be $2 \times 2$ network coordination games (see e.g., Chapter 12 of \cite{goyal2023} for a review). For interactions between two agents, coordinating on the same action generates higher payoffs than mis-coordination does. In the network environment, each individual plays the game against her neighbours and the payoff she gets is the summation of the payoffs from the interactions with each neighbour. The Nash equilibrium (NE) of the game is described by the concept of network cohesiveness (\cite{morris2000contagion}), which means that in a NE, for each agent, the fraction of neighbours choosing the same action as the agent in question should be above some threshold, with the threshold depending on which particular action is considered and on the payoff matrix.

The number of Nash equilibria of the above network coordination game can be large for general networks. This paper studies the game from a dynamical perspective by analysing the properties of the long-term action profile with bounded-rational agents who conduct the EWA learning. More specifically, some questions explored include whether players will play any NE in the long run, whether the possible long-term action profile can be richer than NE, which NE is likely to be played and how it depends on the model primitives, etc.

I first show that the long-term action pattern will generally converge to some action profile and cannot exhibit more complicated dynamics, such as limit cycle or chaos due to the dynamical game being a cooperative dynamical system in the sense of \cite{hirsch1985systems}.

I then show that when agents are very inaccurate (low $\pmb{\lambda}$) or forgetful (high $\pmb{\psi}$), then the long-term action pattern must be that all agents favour the risk-dominant option rather than the efficient option (when there is a conflict between the two), independent of their initial attractions (and other model parameters). In other words, even if all agents favour the efficient option initially, they will prefer the risk-dominant choice to the efficient one ultimately. The reason behind this is that high noises in action and high depreciation of past attractions will attenuate agents' initial attractions and due to the risk of selecting the efficient choice, people will not play it when the overall tendency in the population to play the efficient action is not strong, which will ultimately drive the whole population to the risk-dominant option.

I next study the opposite case where agents have long memory or high accuracy. In this case the set of possible long-term action profiles depends on the parameter $\pmb{\eta}$ -- the weight on unselected action when updating attractions -- and on the payoff matrix. If the payoff from mis-coordination is not very negative, then the long-term action pattern of the dynamical game can be ``close" to any Nash equilibrium starting with some initial conditions. The result is the opposite when mis-coordination generates very negative results. In particular, some Nash equilibrium may not be supported by the behavioural model.

The intuition behind these is that when an agent pays limited attention to unselected actions, his attractions tend to only depend on how satisfied she is from playing the action she chose without comparing it to the payoff she could have gained had she chose the other strategy. When the result of mis-coordination is not very negative, agents could still get a ``satisfactory" payoff when they did not make a best response, but if there is a large payoff loss in the case of mis-coordination, then they might not be satisfied with the results of some partial coordination profile, which induces a stricter requirement as compared to NE for an action profile to be stable. 

I then explore the monotonic and non-monotonic relationships between the set of possible long-term action profiles and parameters. Following similar reasoning as above, when people have high decision accuracy or low forgetfulness, the number of pure-strategy profiles that could be played in the long-run in the behavioural model has a monotonic relationship with $\pmb{\eta}$ in the sense that if a pure-strategy profile can be played under $\pmb{\eta}$, then it can also be played after either a monotonic increase or decrease of $\pmb{\eta}$, with the direction of the monotonic change depending on whether the mis-coordination payoffs compared to the coordination payoffs are sufficiently negative or not. The number of possible long-term action profiles in general has a non-monotonic relationship with $\pmb{\psi}$ and $\pmb{\lambda}$. In particular, it can be larger than the number of Nash equilibrium when people conduct belief-based learning.

The above analyses focus on all possible long-term action profiles. A natural question is then which of them (if there are multiple one) people tend to play in the long term given their initial attractions. This brings to the study of influence questions in terms of whose initial attractions are important to the actions of the whole population. I show that when all agents' initial attractions are sufficiently close to an unstable fixed point, the choice which the population will play in the long run is determined by a weighted average of people's initial attractions, with the weights, which reflect agents' influences, given by the principal left eigenvector associated with a Jacobian matrix. The influence of an agent is decreasing with her forgetfulness and increasing with her decision accuracy under some conditions.

The use of the above eigenvector to represent agents' influences, in principle, is only valid when agents initial attractions are arbitrarily close to the neutral state. The numerical calculations confirm that its prediction accuracy is decreasing with the variance of people's initial attractions, but it is still relatively high when the initial conditions are not very close to the neutral state. I then briefly analyse numerically how the distributions of behavioural parameters across agents in different network locations may have an impact on the long-run action profile. The results show that the long-term action profile tends to be consistent with the predispositions of those who have a high decision accuracy and network centrality. Moreover, the influence of a highly accurate agent is larger when the average accuracy level is small as compared to when it is high.

\section{Literature Review}
\label{sec:liter}
I now relate my work to the literature. First, my research builds on the model of coordination games on networks.\footnote{see e.g., Chapter 12 of \cite{goyal2023} for a review.} It is well known that the set of pure-strategy Nash equilibria in this game can be characterised by the concept of network cohesiveness (e.g., \citet{morris2000contagion} and \citet{goyal2023}): an action profile is a Nash equilibrium if and only if the set of agents choosing the same action forms a $r$-cohesive set, with the parameter $r$ depending on the specific action and on the payoff matrix. Thus, the number of pure-strategy Nash equilibria of the game is at least two (all people coordinating on either option) and can be large for arbitrary network structures. This brings to the questions on the convergence of action dynamics and on the equilibrium selection of the game. 

A strand of this literature studies game play under persistent randomness in individual choices (e.g., \citet{blume1993statistical},  \citet{ellison1993learning}, \citet{ellison2000basins}, \citet{jackson2002formation}, \citet*{kandori1993learning}, \cite{young1993evolution}, \citet{young2001competition}, etc.). A setting frequently analysed is that when agents are supposed to make a decision, they make a myopic best response with a high probability and make a mistake with a low probability. The agents in these models are bounded rational in the sense that (1) their actions are myopic in each period and that (2) they might make errors when making decision. These papers study the long-run likelihood distribution of action profiles. A concept arises from this setting is called stochastic stability (\citet{foster1990stochastic}), which describes the states that are likely to appear if the probability of agents making mistakes becomes arbitrarily small.\footnote{More formally, stochastically stable states are those states that have strictly positive limiting probability if the probability of making an error goes to zero (see \citet{foster1990stochastic}).}

The stochastically stable states may depend on the network structure, the behavioural rules, and how people make mistakes. For example, if people randomly select an option (independent of the payoff structure) when making a mistake, it has been shown that the unique stochastically stable state is that all players coordinate on the risk-dominant option if the interaction structure is a complete network (\citet{kandori1993learning}) or a circle network (\cite{ellison1993learning}). As noted in \cite{jackson2002formation}, coordination on either the risk-dominant or the efficient option is stochastically stable if the interaction structure is a star network.

When players conduct log-linear best responses, \citet{blume1993statistical} shows that all agents playing the risk-dominant choice are the unique stochastically stable state when players are located on a lattice. This outcome can be generalised to any network structure as the stochastically stable state under log-linear best responses is the action profile that maximises the potential for potential games (e.g., \citet{blume1993statistical}, \citet{young2001competition} and Chapter 6, \citet{young2020individual}), which is the state where all agents choose the risk-dominant option in the networked coordination game.

The long-run action patterns also depend on which behavioural rules agents apply. For example,  \citet{robson1996efficient} show that efficient equilibrium rather than risk-dominant equilibrium is the unique stochastically stable state when players are randomly matched to play a symmetric $2\times 2$ coordination game in each period by following the imitating-the-best rule. Equilibrium selection favouring the efficient option is also observed in \cite{alos2008contagion} who show that under some (mild) assumptions on the network structure, only the efficient option can spread to the whole population from an initially small set of adopters if people follow the imitating-the-best behavioural rule. Note that these results are in contrast to most of the stochastic stability literature above (e.g., \cite{ellison1993learning}) that favours risk-dominant option as well as the contagion outcome from best response dynamics in \citet{morris2000contagion} which rules out the spread of the efficient option from an initially finite adopters.

The analysis in my paper is distinct from and more general than the above literature. Firstly, stochastic stability, by definition, only studies long-term behaviour that does not depend on initial conditions which require vanishing magnitude of noises. In contrast to this, my research allows non-vanishing noises and whether initial condition plays a role is endogenously determined by the parameters in the model. Secondly, my work contains more aspects of agents' behavioural features (memory/forgetfulness and reinforcement learning), in addition to decision errors. The above literature has provided evidence that agents' behavioural features can impact equilibrium selection, but they typically only consider one single decision rule in each paper. Analysing all those features together avoids the ad hoc issue of the above papers, at least to some extent. Indeed, in certain ranges of the parameters such as when the decision error is large or when people are forgetful, the unique behavioural equilibrium is favouring the risk-dominant choice, which is consistent with many of the above papers (e.g., \citet{blume1993statistical}). However, other configurations of long-term action profile can occur if the behavioural features vary. Thus, in contrast to most other papers that focus on one given decision rule, this paper provides a characterisation on how the qualitative properties of long-run action patterns may change with multiple behavioural features of agents.

This paper is also related to the broad literature that studies convergence properties of actions under adaptive learning\footnote{For a review of models of adaptive learning, see, e.g., Chapter 15 of \citet{dhami2016foundations}.} (e.g., \citet{robinson1951iterative}, \citet{miyasawa1961convergence}, \citet{monderer1996potential}, \citet{shapley1964some}, and \citet{fudenberg1993learning}) and evolutionary games on networks (e.g., \citet*{debarre2014social}, \citet*{hauert2004spatial}, \citet*{nowak2010evolutionary}, \citet*{roca2009evolutionary}, \citet{zhou2021aspiration}, and \citet{zukewich2013consolidating}).\footnote{For a review of graphical evolutionary games, see, for instance, \citet*{szabo2007evolutionary}.} The literature studying convergence properties of behavioural learning processes typically does not consider networks or multiple behavioural features and its main focus is on whether the action dynamics converge instead of more general equilibrium selection questions, which are studied in this paper. The research on graphical evolutionary games is usually based on numerical methods (\citet{correia2022asymmetric}), or based on special assumptions such as weak selection limit (e.g., \citet{allen2017evolutionary} and \citet{zhou2021aspiration}). The analyses of how the behavioural features captured in the EWA model impact the outcomes of dynamical network games are novel compared to them.

I now discuss the related literature in terms of the specific methodology used and the results obtained in the research. The experience-weighted attraction (EWA) model (\citet{camerer1999experience}) I use in the paper has the advantage that it can include multiple aspects of behavioural features, such as reinforcement learning, forgetfulness, and noises. It can also be naturally generalised to include other behavioural aspects, such as aspiration level and reinforcing the best, which are also briefly discussed in the paper. The empirical relevance of the EWA model has been shown in the studies (e.g., \citet{camerer1999experience}, \citet*{camerer2004behavioural}, \citet{camerer2015behavioral}, and see also the review by \citet{dhami2016foundations}). The EWA model can predict people's actions in experiments better than other models such as belief-based, quantal response, and reinforcement learning in most cases (\citet{camerer2015behavioral} and \citet{dhami2016foundations}).

A few papers have characterised the properties of action patterns under the EWA learning in certain types of games. For example, early work (\citet*{sato2002chaos} and \citet*{sato2003coupled}) show that reinforcement learning (a special case of the EWA model) can result in chaotic behaviour in even simple low dimensional games. \citet*{galla2013complex} and \citet*{sanders2018prevalence} show that in random complicated games, the action patterns under EWA learning have several possible outcomes, ranging from unique fixed states, to a large number of fixed states, and to chaotic behaviour, depending on the correlation of payoffs across players and the level of forgetfulness. \citet{pangallo2017taxonomy} study the long-term properties of action patterns in generic $2 \times 2$ games. They characterise how the payoff matrix and the irrationality level of players reflected in the EWA parameters determine whether the dynamics exhibit convergence to (pure-strategy or mixed) fixed point(s) or limit cycle and chaotic behaviours. The relationship between action patterns and $\pmb{\psi}$ and $\pmb{\lambda}$ in my paper is consistent with the results in \citet{pangallo2017taxonomy}. My research is distinct from theirs in four aspects: the consideration of network structures, the inclusion of reinforcement learning as one behavioural feature, the relaxation of the homogeneity assumption on behavioural rules across agents, and the study of the influence questions.

On the analysis of agents' influences, the importance of some eigenvectors also appears in other contexts, such as the DeGroot learning model (\cite{degroot1974reaching})\footnote{See also \citet*{demarzo2003persuasion}, \cite{golub2010naive}, and \citet{golub2012homophily} for related studies and, e.g., Chapter 13 of \citet{goyal2023} and \cite{golub2017learning} for a survey of network learning models.} and diffusion models on networks (e.g., \cite{newman2018networks}).\footnote{Eigenvectors are also important to network interventions problems (see e.g., \cite*{galeotti2020targeting}, \citet{galeotti2021discord}, etc.).} For example, in the DeGroot learning model, the importance of agents is summarised by the eigenvector centrality of the belief updating matrix. In this paper the Jacobian matrix evaluated at a neutral point plays a similar role as the updating matrix does in the learning model in that both describe how the states (attractions or beliefs) of agents influence each other. The influence vector in this work is distinct from that in the learning model in that it incorporates the information about agents' behavioural features into the network structures, which provides a compact representation showing how behavioural features interact with network structures in impacting the action pattern of the population.\footnote{Another significant difference as will be described in length in Section \ref{sec:influ} is that the DeGroot learning model is a linear system while the model in this paper is nonlinear so the prediction made by the left eigenvector only works when people's initial attractions are close to the unstable fixed point. However, the numerical calculations show that it can work reasonably well under a relatively large range.} In the network SI model, the eigenvector centrality associated with the adjacency matrix describes the probabilities of different agents getting infection in early periods (\cite{newman2018networks}), which serves an analogous role as the left eigenvector did in this research. One difference between the two is that the eigenvector centrality in my paper not only impacts the outcomes in early periods but also has implications for what choice will be selected by the population in the long run.

To conclude, my research is distinct from other literature mainly in that I smoothly consider multiple aspects of behavioural features and heterogeneities of them across agents in a single study. Without imposing an ex ante reduction of the model parameters, I explore what patterns there are in the high-dimensional (so rather uninterpretable) mapping from the model primitives to the long-term action profile. The bifurcation analysis of the dynamical game and the study of the agents' influences show how the elements in behavioural economics might produce insights into the study of economics of networks. The analysis is also more flexible than previous research and can be extended to include other behavioural rules and study other games.\footnote{For example, prisoner's dilemma and anti-coordination games.}

The remaining part of the paper is organised as follows: Section \ref{sec: set} sets up the basic framework. Section \ref{sec:beha} considers how the set of possible long-term action patterns depends on the model primitives, and Section \ref{sec:influ} studies the influence questions in terms of whose initial predispositions are important to the action profile of the population. Section \ref{sec: extension} briefly discusses some model extensions. Section \ref{sec:con} concludes.

\section{The Model}
\label{sec: set}
The paper studies network games with bounded rational agents. There are three ingredients in the model: stage game, network structure, and behavioural features.
\subsection{Stage Game} 
Consider a $2 \times 2$ coordination game as shown in Table \ref{table-basic}:
\begin{table}[h!]
\centering
\caption{a coordination game}
\label{table-basic}
 \begin{tabular}{|c |c | c |} 
 \hline
  & $C$ & $D$ \\ [0.5ex]
 \hline
 $C$ & z, z & y, x \\ [0.5ex]  \hline
 $D$ & x, y & w, w \\
  [1ex] 
 \hline
 \end{tabular}
\end{table}

\begin{assumption}
\label{ass: payoff-basic}
(payoff structure): $z>x$, $w>y$, $w>x$, $z>y$, $w>0$ and $z>0$.
\end{assumption}

Thus, the game has two pure-strategy Nash equilibria: (1) all players choosing action $C$ (denoted by $s^*_1=s^*_2=0$) and (2) all players choosing action $D$ ($s^*_1=s^*_2=1$). That is, coordinating on either choice is a Nash equilibrium and mis-coordination is not. Moreover, mis-coordination results in payoffs that are worse than either coordination outcome for both players. The assumption of $w>0$ and $z>0$ indicates that coordination results in positive payoffs. This assumption is made for the reinforcement learning described later and is without loss of generality as shown in Appendix.
\subsection{Networks} It is assumed in the paper that agents are located on an undirected network $G=(N, E)$, where $N$ is the set of individuals, $N=\{1, 2, ..., n\}$, and $E$ is the set of edges. Denote the adjacency matrix associated with the graph also by $G$. Two agents $i$ and $j$ are connected if and only if $G_{ij}=G_{ji}=1$. For each individual $i \in N$, the neighbours of $i$ is the set of agents with whom $i$ is connected, and this set is denoted by $N_i$. Without loss of generality, assume that the network $G$ is connected (if not, we can analyse each component separately).

\begin{assumption} (network):
\label{ass:network}
The undirected network $G$ is connected. 
\end{assumption}

Each individual $i$'s payoff in each period is the sum of the payoffs from the interaction with each of her neighbours.
\begin{align}
\label{net-utility}
u_i(\mathbf{s}) = \sum_{j \in N_i} u_i(s_i, s_j)
\end{align}

The actions studied in this paper are generally mixed strategies as will be discussed later. Denote the probability/frequency of player $j$ playing strategy $D$ at time $t$ by $p_{j}(t)$ (or simply $p_j$ if omitting the time notation), then the probability/frequency of player $j$ playing strategy $C$ at time $t$ is $1-p_{j}(t)$. Denote $\mathbf{p} = [p_1, p_2, ..., p_n]$ as the strategy profile of the population and $p_{-i}$ as the strategy of all individuals other than $i$.

The payoff a player $i$ obtains is then:
\begin{align}
\label{expected utility}
u_i(\mathbf{p})=p_i u(p_i=1, p_{-i}) + (1-p_i) u(p_i=0, p_{-i}) 
\end{align}

where
\begin{align}
\label{expected utility2}
\begin{split}
 & u(p_i=1, p_{-i}) = \sum_{j \in N_i} \left( p_j w + (1-p_j)x \right)\\
&  u(p_i=0, p_{-i}) = \sum_{j \in N_i} \left(p_j y + (1-p_j)z \right)   
\end{split}
\end{align}

\subsection{Behavioural Rules} 
This paper uses (an adjusted version) of the EWA model by (\cite{camerer1999experience}) as agents' behavioural model. Each individual $i$ has an attraction vector representing the attraction of each strategy at each time: $a_{i}(t) = [a_{i,0}(t),a_{i,1}(t)]$, $\forall i, t$. For each individual, the attraction vector determines action in each period according to the following logit response. Define $q_i(t):=a_{i,1}(t)-a_{i,0}(t)$, the difference between the attraction of action $D$ and that of action $C$, then

\par\noindent\textit{attraction $\to$ action:}
\begin{align}
\begin{split}
   p_{i}( t) := & Pr(s_i(t)=1)=\frac{\exp(\lambda_i a_{i,1}(t))}{\exp(\lambda_i a_{i,0}(t))+\exp(\lambda_i a_{i,1}(t))} \\
= & \frac{1}{1+\exp(-\lambda_i q_i(t) )}  
\end{split}
\label{att to str}
\end{align}
where $\lambda_i >0$ reflects the sensitivity of actions to attractions. When $\lambda_i=0$, agents randomly choose actions regardless of the attractions, while if $\lambda_i =\infty$, then the agents will choose the action that has the highest attraction level with certainty. We can see that only the difference in the attraction of the two actions matters, so $q_i(t)$ is sufficient to summarise the information about an agent $i$'s attractions at time $t$.

As in \citet*{sato2002chaos}, \citet{pangallo2017taxonomy}, and \citet{galla2013complex}, I assume that agents play each strategy with some frequency instead of playing one strategy with a specific probability. This assumption transforms the stochastic process of game play into a deterministic one. The main reason for this setting is mathematical tractability, but there are also other justifications. Firstly, in some situations the ``frequency" might be interpreted as the extent of an agent's behaviour (e.g., the extent of altruism) which might evolve gradually. Secondly, when people do not have a strong preference for a particular strategy, they may indeed try different actions with positive frequencies and those frequencies might also evolve gradually. Thirdly, \cite{galla2013complex} have shown by simulations that the deterministic process can approximate the original stochastic process well in most cases. 

Another important ingredient of the behavioural rule is how agents update attractions according to the outcomes of game play, which is defined as follows.

\par\noindent\textit{updates of attractions:}\footnote{This can be derived from the limit of the following discrete case:
\begin{align}
 a_{i,k}(t+ \triangle)=e^{-\psi_i \triangle} a_{i,k}(t) + \left[Pr(s_i(t)=k)+\eta_i (1-Pr(s_i(t)=k)) \right]u(s_i(t)=k, p_{-i}(t) ) \triangle
\end{align} where $k =0, 1$.
Taking the limit $\triangle \to 0$ gives:
\begin{align}
\begin{split}
\dot{a}_{i,k}(t) & =\lim_{\triangle \to 0} \frac{e^{-\psi_i \triangle}-1}{\triangle} a_{i,k}(t) + \left[Pr(s_i(t)=k)+\eta_i (1-Pr(s_i(t)=k)) \right]u(s_i(t)=k, p_{-i}(t) )  \\
& = -\psi_i a_{i,k}(t)  + \left[Pr(s_i(t)=k)+\eta_i (1-Pr(s_i(t)=k)) \right]u(s_i(t)=k, p_{-i}(t) )
\end{split}
\end{align}
}\begin{align}
\begin{split}
\label{attraction updates}
&\dot{a}_{i,1}(t) =-\psi_i a_{i,1}(t) + (p_{i}(t)+\eta_i(1-p_{i}(t))) \sum_{j \in N_i} \left(p_{j}(t) w + (1-p_{j}(t))x \right) \\
& \dot{a}_{i,0}(t) =-\psi_i a_{i,0}(t) + (1-p_{i}(t)+\eta_i p_{i}(t)) \sum_{j \in N_i}   \left(p_{j}(t) y + (1-p_{j}(t))z \right)
\end{split}
\end{align}
where the notation $\dot{a}$ represents its time derivative $\frac{d a}{d t}$. Thus, agents' attractions follow an exponential depreciation rate $\psi_i >0$. The depreciation might be due to forgetfulness or due to people intentional paying more attention to new information and less to past information.

$\eta_i \in [0,1]$ is the extent to which forgone payoffs are considered in attraction updates. Consider the two extreme cases. When $\eta_i=1$, as shown in \citet{camerer1999experience}, the model reduces to belief-based learning since the attraction of each action will be fully updated regardless of whether the agent plays that action.\footnote{To see this, note that when $\eta_i=1$
\begin{align}
\begin{split}
 \label{belief-based}
&\dot{a}_{i,1}(t) =-\psi_i  a_{i,1}(t)  + \sum_{j \in N_i} \left( p_{j}(t)  w + (1-p_{j}(t) )x \right)  \\
& \dot{a}_{i,0}(t) =-\psi_i a_{i,0}(t)  + \sum_{j \in N_i} \left(p_{j}(t)  y + (1-p_{j}(t) )z \right)   
\end{split}
\end{align}} When $\eta_i=0$, the updates of attractions reflect reinforcement learning as the attraction update of a particular action by an agent is proportional to the frequency with which that action is played by the agent.\footnote{More precisely, when $\eta_i=0$
\begin{align}
\begin{split}
 \label{reinforce}
&\dot{a}_{i,1}(t) =-\psi_i  a_{i,1}(t)  + p_i(t) \sum_{j \in N_i} \left( p_{j}(t)  w + (1-p_{j}(t) )x \right)  \\
& \dot{a}_{i,0}(t) =-\psi_i a_{i,0}(t)  + (1-p_i(t)) \sum_{j \in N_i} \left(p_{j}(t)  y + (1-p_{j}(t) )z \right)   
\end{split}
\end{align}}

The baseline model assumes that agents' behavioural parameters are time-invariant and are within the following range.

\begin{assumption} (behavioural parameters):
\label{ass:finite}
$\psi_i > 0$, $\lambda_i > 0$ and $\eta_i \in [0, 1]$, for any $i$.
\end{assumption}

Recall that $q_{i}(t) :=a_{i,1}(t)-a_{i,0}(t)$ summarises an agent's individual state at each time $t$. For brevity ignore the time $t$ notation and define $\mathbf{q}=[q_1, q_2, ... q_n]^T$ and $\dot{\mathbf{q}}=[\dot{q_1}, \dot{q_2},... \dot{q_n}]^T$. It can be easily seen that $\dot{\mathbf{q}}$ only depends on $\mathbf{q}$ and is given by, for each $i$,
\begin{align}
\begin{split}
\label{big q}
 \dot{q}_{i} :=  F_i(\mathbf{q})= & -\psi_i q_{i} + (p_{i}+\eta_i(1-p_{i})) \sum_{j \in N_i} \left(p_{j} w + (1-p_{j})x \right) \\ &
  - (1-p_{i}+\eta_i p_{i}) \sum_{j \in N_i}   \left(p_{j} y + (1-p_{j})z \right)
  \end{split}
\end{align}
where $p_i$ and $p_j$ are functions of $\mathbf{q}$ and are defined in (\ref{att to str}). Then (\ref{big q}) defines a deterministic autonomous dynamic system $\dot{\mathbf{q}}=F(\mathbf{q})$, where the attraction difference vector $\mathbf{q}$ completely characterises the state of the system at any time. The dynamical system completely characterises the dynamics of agents' attractions and thus action patterns.

\subsection{Parameter Space and Notations} In this paper, a scalar is denoted by lowercase letters (e.g., payoff $w$). A vector is represented with a bold mode (e.g., attraction difference $\mathbf{q}$). A particular element of a vector is represented with a subscript without the bold mode (e.g., agent $i$'s attraction difference $q_i$). A matrix is represented by a capital letter (e.g., the adjacency matrix $G$).

For comparison of vectors, the notation is that $\mathbf{a} \geq \mathbf{b}$ represents that $a_i \geq b_i$ for any $i$, and $\mathbf{a} > \mathbf{b}$ indicates that $a_i > b_i$ for any $i$.

The parameters in the model include the payoff matrix $\Pi$, the network $G$, and three behavioural parameters $\pmb{\psi}$, $\pmb{\lambda}$ and $\pmb{\eta}$, where $\pmb{\psi} = \{\psi_i \} _{i=1,2,...,n}$, $\pmb{\lambda}=\{ \lambda_i \}_{i=1,2,...,n}$, and $\pmb{\eta} =\{ \eta_i \} _{i=1,2,...,n}$. Denote the set of parameters as $\Gamma=\{\Pi, G, \pmb{\psi}, \pmb{\lambda}, \pmb{\eta} \}$.\footnote{With these notations in hand, the autonomous dynamical system (\ref{big q}) can be neatly represented in the following matrix form
\begin{align}
\begin{split}
\label{big q matrix}
 \dot{\mathbf{q}} = &-\pmb{\psi}  \circ \mathbf{q} + (\mathbf{p}+ \pmb{\eta} \circ (\pmb{1}-\mathbf{p})) \circ \left( (w-x) G \mathbf{p} + x G \pmb{1} \right) \\ 
 & - (\mathbf{1}-\mathbf{p}+\pmb{\eta} \circ \mathbf{p}) \circ \left ( (y-z) G \mathbf{p} + z G \pmb{1}\right)
 \end{split}
\end{align}
where $\circ$ represents element-wise matrix multiplication.}

Given the initial attraction difference $\mathbf{q}(0) \in \mathcal{Q}$, the evolution of $\mathbf{q}$ is deterministic. Also, recall that there is a one-to-one relationship between $\mathbf{q}$ and $\mathbf{p}$ as $p_i=\frac{1}{1+e^{-\lambda_i q_i}}$ for any $i$, so $\mathbf{q}$ fully summarises the action profile of the population.\footnote{Indeed, the evolution rule of $\mathbf{p}$ is
\begin{align}
\begin{split}
\label{big p}
 \dot{p}_{i}= &  p_i (1-p_i) \left[\psi_i \ln \left( \frac{1}{p_i}-1 \right) + \lambda_i \tau_1(p_i) u(p_i=1, p_{-i})- \lambda_i \tau_0(p_i) u(p_i=0, p_{-i})  \right] 
  \end{split}
\end{align}
for any $i$, where $\tau_1(p_i)=p_{i}+\eta_i(1-p_{i})$ and $\tau_0(p_i)=1-p_{i}+\eta_i p_{i}$. In the paper, I will use the dynamical system of $\mathbf{q}$ in (\ref{big q}) and that of $\mathbf{p}$ in (\ref{big p}) interchangeably depending on which one is more convenient to the analysis of a particular question.
}

For reference, recall that $\mathbf{q^*}$ is a fixed point of the dynamical system $\dot{\mathbf{q}}= F(\mathbf{q})$ if $F( \mathbf{q^*}) =0 $. A fixed point is stable if (1) there exists a neighbour $N^\delta(\mathbf{q^*})$ such that whenever $\mathbf{q}(0) \in N^\delta(\mathbf{q^*})$, $\lim_{t \to \infty} \mathbf{q}(t) = \mathbf{q^*}$, and (2) for any $N^\epsilon(\mathbf{q^*})$, there exists a neighbour $N^\delta(\mathbf{q^*})$ such that whenever $\mathbf{q}(0) \in N^\delta(\mathbf{q^*})$, we have that $\mathbf{q}(t) \in N^\epsilon(\mathbf{q^*})$ for any $t \geq 0$. These definitions are well known in dynamical systems and can be found in e.g., \cite{strogatz2018nonlinear}.

The central task in the paper is to study how the properties of the long-term behaviour of the dynamical system -- equivalently, the properties of the long-term action pattern of the game -- depend on the model primitives $\Gamma=\{ \Pi, G, \pmb{\psi}, \pmb{\lambda}, \pmb{\eta} \}$. The common questions studied in dynamical systems apply, which include, for example, whether the action profile converges, how many steady states there are, how the long-term action profile depends on initial conditions, etc.

\section{Behavioural Equilibrium}
\label{sec:beha}
This section discusses how the set of possible long-term outcomes of the game play depends on the model primitives. I first define the concept of behavioural equilibrium (BE) considered in the paper.

\begin{definition}
\label{def:be}
A mixed-strategy profile $\mathbf{p}^*$ is a behavioural equilibrium (BE) if and only if its corresponding $\mathbf{q}^*$ is a stable fixed point of the dynamical system (\ref{big q}). Denote the set of BE by $\mathcal{B}$.
\end{definition}

According to the definition of stable fixed points (as described in Section \ref{sec: set}), a BE is an action profile under which all agents' attractions of actions and thus frequencies of actions remain constant over time. In addition, there is some nontrivial set of initial conditions from which the action pattern converges to that action profile. In other words, loosely speaking, a BE is a stationary action profile that can be seen in the long-run.

It is well known that behaviours in a high dimensional nonlinear dynamical system may have attractors that are not limited to fixed points, such as limit cycles and even chaos, but the next proposition summarises that the action profile of the dynamical game in this paper must generally converge to some fixed action profile.

\begin{proposition}
\label{pro1}
Suppose Assumptions \ref{ass: payoff-basic}-\ref{ass:finite} hold. Then the action profile $\mathbf{p}(t)$ converges to a strategy profile from almost all initial attraction differences $\mathbf{q}(0)$.
\end{proposition}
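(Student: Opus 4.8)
The plan is to show that the dynamical system $\dot{\mathbf{q}} = F(\mathbf{q})$ given by (\ref{big q}) is a cooperative (monotone) dynamical system in the sense of \cite{hirsch1985systems}, and then invoke the convergence theorem for such systems: in a strongly cooperative, irreducible system, almost every bounded trajectory converges to an equilibrium. This is exactly the structure the introduction promises (``the dynamical game being a cooperative dynamical system''), so the whole argument reduces to verifying the sign condition on the off-diagonal partial derivatives of $F$ and checking the hypotheses (boundedness of trajectories, irreducibility) needed to apply the theorem.

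First I would compute $\partial F_i / \partial q_j$ for $j \neq i$. Since $p_j = 1/(1+e^{-\lambda_j q_j})$ is strictly increasing in $q_j$ with $p_j' = \lambda_j p_j(1-p_j) > 0$, and $q_j$ enters $F_i$ only through $p_j$ (for $j \in N_i$; the derivative is zero for $j \notin N_i$), it suffices to check the sign of $\partial F_i / \partial p_j$. From (\ref{big q}), $\partial F_i/\partial p_j = (p_i + \eta_i(1-p_i))(w-x) - (1-p_i+\eta_i p_i)(y-z)$. Under Assumption \ref{ass: payoff-basic} we have $w - x > 0$ and $y - z < 0$ (since $z > y$), and the coefficients $p_i + \eta_i(1-p_i)$ and $1-p_i+\eta_i p_i$ are both nonnegative — in fact strictly positive when $\eta_i > 0$ or $p_i \in (0,1)$. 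Hence $\partial F_i/\partial q_j \geq 0$ for all $i \neq j$, with strict positivity for $j \in N_i$ on the relevant open region. Because $G$ is connected (Assumption \ref{ass:network}), the associated ``sign graph'' is irreducible, so the system is strongly cooperative and irreducible on the interior of the state space. I would then note that trajectories are bounded: $p_i, p_j \in (0,1)$ make the ``payoff'' terms in (\ref{big q}) uniformly bounded, so $\dot q_i \leq -\psi_i q_i + M_i$ for a constant $M_i$, giving an absorbing box $|q_i| \leq M_i/\psi_i$ (up to constants) into which every trajectory eventually enters; in particular every forward orbit has compact closure.

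With strong monotonicity, irreducibility, and precompact orbits in hand, the Hirsch generic-convergence theorem applies directly and yields that the set of initial conditions $\mathbf{q}(0)$ whose trajectory does not converge to an equilibrium has Lebesgue measure zero; translating back through the fixed homeomorphism $q_i \mapsto p_i$ gives convergence of $\mathbf{p}(t)$ to a strategy profile from almost all $\mathbf{q}(0)$, which is the claim.

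The main obstacle I anticipate is technical care at the boundary rather than the core sign computation. The coefficients $p_i + \eta_i(1-p_i)$ and $1-p_i+\eta_i p_i$ can vanish when $\eta_i = 0$ and $p_i \in \{0,1\}$, but $p_i \in \{0,1\}$ is never attained in finite time (it corresponds to $q_i = \pm\infty$), so on the actual phase space $\mathbb{R}^n$ the system is genuinely strongly cooperative; still, I would want to state the monotonicity on $\mathbb{R}^n$ (or on the absorbing box's interior) cleanly and make sure the irreducibility claim uses only edges $j \in N_i$ where the partial is strictly positive. A second point requiring a sentence of justification is that Hirsch's theorem needs the flow to be defined for all positive time and orbits bounded — both follow from the absorbing-box estimate above — and that ``almost all'' is with respect to Lebesgue measure, which is what the dissipative monotone convergence theorem provides. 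None of these are deep, but they are where the proof has to be written carefully.
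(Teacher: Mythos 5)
Your proposal is correct and follows essentially the same route as the paper: verify the nonnegativity of the off-diagonal partials $\partial F_i/\partial q_j$ to establish that $\dot{\mathbf{q}}=F(\mathbf{q})$ is a cooperative irreducible system, note that trajectories are bounded because $\pmb{\psi}>\mathbf{0}$ and payoffs are finite, and invoke Hirsch's generic convergence theorem for such systems. Your additional remarks on boundary degeneracy and the absorbing box are careful elaborations of steps the paper states more tersely, not a different argument.
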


\begin{proof}
It is equivalent to study the dynamical system of $\mathbf{q}$ as in (\ref{big q}), which has
\begin{align}
\label{coop}
 \frac{\partial F_i(\mathbf{q})}{\partial q_j}  = G_{ij} \frac{d p_j}{d q_j} \left[(p_i + \eta_i (1-p_i))  (w-x) - (1-p_i + \eta_i p_i) (y-z) \right] \geq 0
\end{align}
for any $i \neq j$, because $G_{ij} \geq 0$, $\frac{d p_j}{d q_j} \geq 0$, $p_i + \eta_i (1-p_i) \geq 0$, $1-p_i + \eta_i p_i \geq 0$ , $w>x$, and $y<z$.

Thus, the dynamical system $\dot{\mathbf{q}}=F(\mathbf{q})$ is a cooperative dynamical system as in, e.g., \cite{hirsch1985systems}. Also, $F$ is irreducible since the network $G$ is connected. Moreover, any forward trajectory has compact closure as the trajectory is bounded due to $\pmb{\psi}>\mathbf{0}$ and finite payoffs. 

Then applying Theorem 4.1 in \citet{hirsch1985systems}, the dynamical system $\dot{\mathbf{q}}=F(\mathbf{q})$ converges to some fixed point from almost all initial conditions.
\end{proof}

This statement indicates that the action dynamics will generally converge and cannot exhibit more complicated dynamics, such as limit cycle or chaos, which may occur in other types of games where the payoff matrix is asymmetric (see e.g., \citet{galla2013complex}).

Proposition \ref{pro1} justifies restricting the study of long-term outcomes to behavioural equilibria as in Definition \ref{def:be} which only consider fixed points. A question then to be explored is how the behavioural equilibria relate to the pure-strategy Nash equilibria (NE) of the game. Note that NE is characterised by the concept of $r$-cohesiveness as in \cite{morris2000contagion}.\footnote{Formally, let $r=\frac{z-x}{w-x+z-y}$ and let $N_C^*$ and $N_D^*=N \setminus N_C^*$ be the group of agents choosing action $C$ and $D$ respectively. Then this pure strategy profile is a Nash equilibrium if and only if the group $N_C^*$ is $(1-r)$-cohesive and the group $N_D^*$ is $r$-cohesive (\cite{morris2000contagion}).} For concreteness, I analyse the BE in the context where there is a conflict between risk dominance and efficiency. The consideration of the trade-offs between these two is common in the literature as discussed in Section \ref{sec:liter}.

\begin{assumption}
\label{ass: risk-effi}
Suppose $z>w>x>y$, $w+x>z+y$, $w>0$ and $z>0$.
\end{assumption}

This assumption indicates that, as defined in \cite{harsanyi1988general}, $D$ is the risk-dominant choice and $C$ is the efficient choice.

As the dynamical system is nonlinear, it is generally not possible to analytically characterise the whole set of behavioural equilibria, $\mathcal{B}$. The analyses thus focus on qualitative properties of $\mathcal{B}$ and how they depend on the model parameters $\Gamma$. Recall that the primitives of the model, $\Gamma$, include the network $G$, the payoff matrix $\Pi$, and the three behavioural parameters: attraction depreciation rate $\pmb{\psi}$, decision accuracy $\pmb{\lambda}$, and weight on forgone payoffs $\pmb{\eta}$. We can regard the set of behavioural equilibria, $\mathcal{B}$, as a function of those parameters: $\mathcal{B}(\Gamma)$. The analyses focus on what patterns can be obtained from the high dimensional mapping $\mathcal{B}(\Gamma)$. The first exploration is on some limiting properties (i.e., what properties the set of behavioural equilibria exhibits when some model parameters, such as behavioural features, take extreme values). Proposition \ref{pro2} studies the case where $\pmb{\lambda} \to \pmb{0}$ or $\pmb{\psi} \to \pmb{\infty}$\footnote{Unless otherwise stated, $\pmb{\lambda} \to \mathbf{a}$ or $\pmb{\psi} \to \mathbf{b}$ in the paper means that at least one of the two holds, thus including the case where they both hold i.e., $[ \pmb{\lambda}, \pmb{\psi}] \to [\mathbf{a}, \mathbf{b}]$.} --- agents are inaccurate in decision making or are forgetful, while Proposition \ref{pro: eta} considers the opposite case where $\pmb{\lambda} \to \pmb{\infty}$ or  $\pmb{\psi} \to \pmb{0}$ --- agents are accurate in decision making or have long memory. 

\begin{proposition}
\label{pro2}
Suppose Assumptions \ref{ass: payoff-basic}-\ref{ass: risk-effi} hold. Then for any network $G$ and any $\pmb{\eta}$,  as $\pmb{\lambda} \to \pmb{0}$ or $\pmb{\psi} \to \pmb{\infty}$ (or both), all agents favour the risk-dominant option in any BE (i.e., $q_i^* >0 $, $\forall i$).
\end{proposition}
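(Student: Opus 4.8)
The plan is to analyze the fixed-point equation $F(\mathbf{q}^*)=0$ directly in the stated limits and show that any solution must have all coordinates strictly positive. First I would rewrite the fixed-point condition from \eqref{big q} as
\begin{align}
\label{fp-rewrite}
q_i^* = \frac{1}{\psi_i}\left[ (p_i^* + \eta_i(1-p_i^*)) \sum_{j \in N_i}\bigl(p_j^* w + (1-p_j^*)x\bigr) - (1-p_i^* + \eta_i p_i^*)\sum_{j \in N_i}\bigl(p_j^* y + (1-p_j^*)z\bigr) \right],
\end{align}
where $p_i^* = 1/(1+e^{-\lambda_i q_i^*})$. The key observation is that the bracketed term is bounded: since all $p_j^* \in [0,1]$ and payoffs are finite, each sum is bounded by $|N_i|\max(|w|,|x|,|y|,|z|)$, and the prefactors lie in $[0,1]$, so the whole bracket is bounded in absolute value by some constant $M_i$ depending only on the network and payoffs (not on $\pmb\lambda$ or $\pmb\psi$). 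Hence $|q_i^*| \le M_i/\psi_i$, which $\to 0$ as $\psi_i \to \infty$. Separately, as $\lambda_i \to 0$ we have $p_i^* \to 1/2$ regardless of $q_i^*$. In either limit, then, every $p_i^* \to 1/2$ (in the $\pmb\psi\to\pmb\infty$ case because $q_i^*\to 0$ forces $p_i^*\to 1/2$; in the $\pmb\lambda\to\pmb 0$ case directly). I would make this rigorous by a compactness/continuity argument: along any sequence of parameters realizing the limit, any corresponding sequence of BE has $p_i^* \to 1/2$ for all $i$.

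Next I would evaluate the sign of $q_i^*$ when all $p_j^*$ are near $1/2$. Plugging $p_j^* = 1/2$ into the bracket in \eqref{fp-rewrite} gives, for each $i$,
\begin{align}
\label{half-eval}
\psi_i q_i^* \approx \frac{1+\eta_i}{2}\cdot |N_i|\cdot \frac{w+x}{2} - \frac{1+\eta_i}{2}\cdot |N_i|\cdot \frac{y+z}{2} = \frac{(1+\eta_i)|N_i|}{4}\bigl[(w+x)-(y+z)\bigr],
\end{align}
and by Assumption \ref{ass: risk-effi} we have $w+x > z+y$, so this quantity is strictly positive (note $1+\eta_i \ge 1 > 0$ and $|N_i|\ge 1$ since $G$ is connected). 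Therefore at the symmetric point the right-hand side of the fixed-point map is strictly positive, bounded away from zero uniformly in $i$ (by $\frac{|N_i|}{4}[(w+x)-(z+y)]/\psi_i > 0$, or in the $\pmb\psi\to\pmb\infty$ normalization by looking at $\psi_i q_i^*$). By continuity of the bracketed expression in $(p_1^*,\dots,p_n^*)$, there is a neighborhood of the symmetric profile on which the bracket stays strictly positive for every $i$; once the parameters are extreme enough that all $p_i^*$ lie in this neighborhood, \eqref{fp-rewrite} forces $q_i^* > 0$ for all $i$, which is the claim.

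The one genuine subtlety — and the step I would be most careful about — is the logical structure of "in any BE as $\pmb\lambda\to\pmb0$ or $\pmb\psi\to\pmb\infty$." I would phrase it as: there exist thresholds (an $\bar\lambda>0$ and a $\underline\psi$, depending on $\Gamma$ restricted to $G,\Pi,\pmb\eta$) such that whenever $\pmb\lambda < \bar\lambda$ componentwise or $\pmb\psi > \underline\psi$ componentwise, every BE satisfies $q_i^*>0$ for all $i$. To prove this cleanly I would argue by contradiction: if not, there is a sequence of parameter vectors along the limit and associated BE $\mathbf{q}^{*(k)}$ with some coordinate $\le 0$; by the boundedness argument above (in the $\pmb\psi$ case, $q_i^*$ is trapped in a shrinking interval; in the $\pmb\lambda$ case I need to rule out $q_i^*$ escaping to $\pm\infty$, but $p_i^*\to1/2$ still holds pointwise and the bracket in \eqref{fp-rewrite} only depends on the $p$'s, so $\psi_i q_i^*$ converges to the positive constant in \eqref{half-eval}, making $q_i^*>0$ for large $k$) we get $p_i^{*(k)}\to 1/2$ for all $i$, hence the contradiction with \eqref{half-eval}. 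I do not expect the routine estimates to be hard; the care is just in handling the $\pmb\lambda\to\pmb0$ case where $q_i^*$ itself need not be small, only $p_i^*-\tfrac12$, and in making the "or" in the hypothesis precise. It may also be worth remarking that this argument in fact shows $\mathcal{B}$ is a singleton in the limit (a unique BE near the all-$\tfrac12$ profile, shifted so that $q_i^*>0$), though the proposition only asserts the sign.
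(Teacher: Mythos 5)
Your proof is correct, and it arrives at the same key computation as the paper: the drift evaluated at the symmetric profile $\mathbf{p}=\mathbf{\frac{1}{2}}$ equals $\frac{(1+\eta_i)d_i}{4}(w+x-y-z)>0$, which is exactly the paper's limiting expression (\ref{p2-1}). But you close the argument by a somewhat different route. The paper bounds fixed points only from below ($q_i^*\geq -4zd_i/\psi_i$), shows $\dot q_j>0$ on the box where all coordinates lie in $[-4zd_i/\psi_i,0]$, and then invokes the cooperative structure of the vector field ($\partial F_j/\partial q_i\geq 0$ for $i\neq j$, established in Proposition \ref{pro1}) to rule out fixed points at which some coordinates are positive while others are not. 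Your two-sided bound $|q_i^*|\leq M_i/\psi_i$ makes that monotonicity step unnecessary: it forces $\lambda_i q_i^*\to 0$ for every $i$ along any sequence of fixed points in either limit, hence $p_i^*\to\frac{1}{2}$ for \emph{all} $i$ simultaneously, and continuity of the bracketed expression (which depends only on $\mathbf{p}$) then gives the sign of every coordinate directly. That is a slightly more self-contained execution. Two small points of care: (i) your remark that $p_i^*\to\frac{1}{2}$ ``regardless of $q_i^*$'' as $\lambda_i\to 0$ is only true because of the boundedness estimate --- a priori $q_i^*$ could scale like $1/\lambda_i$ --- but since you do supply that estimate, the argument is sound once the two observations are combined in the right order; (ii) the threshold $\overline{\pmb{\lambda}}$ in your ``or'' formulation necessarily depends on $\pmb{\psi}$ as well as on $(G,\Pi,\pmb{\eta})$, since the confinement of $q_i^*$ is to $[-M_i/\psi_i, M_i/\psi_i]$; the paper handles this by fixing all parameters other than the one sent to its limit, and your statement of the thresholds should do the same.
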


\begin{proof}
From (\ref{big q}),
\begin{align}
\label{l1-q}
 \dot{q}_{i} > -\psi_i q_{i} - 4 z d_i
\end{align}
Thus, if $q_i < -\frac{4z d_i}{\psi_i}$, then $\dot{q}_{i} >0$ regardless of $q_{-i}$. This means that a fixed point can only occur when $q_i \geq -\frac{4z d_i}{\psi_i}$ for all $i$.

Fixing all model parameters other than $\pmb{\lambda}$, then for any $\mathbf{q}$ such that $q_i \in [ -\frac{4z d_i}{\psi_i} , 0 ]$ for all $i$, we have that as $\pmb{\lambda} \to \pmb{0}$,\begin{align}
\label{p2-1}
    \begin{split}
        \dot{\mathbf{q}}\to -\pmb{\psi} \circ \mathbf{q} + \left(\frac{w+x-y-z}{2}\right) \frac{\pmb{1}+\pmb{\eta}}{2}  \circ G \pmb{1}
    \end{split}
\end{align}
By the assumption of the payoff matrix, $w+x-y-z>0$, so when $\mathbf{q}$ is such that $q_i \geq -\frac{4z d_i}{\psi_i}$ for all $i$, as $\pmb{\lambda} \to \pmb{0}$, it must be that $\dot{q}_j>0$ whenever $q_j \in [ -\frac{4z d_i}{\psi_i} , 0 ]$ since $\dot{q}_j$ is nondecreasing in $q_i$, where $i \neq j$. Thus, $F(\mathbf{q}^*) \neq \pmb{0}$ when $\mathbf{q}^* \ngtr \pmb{0}$. In other words, any $F(\mathbf{q}^*)=\pmb{0}$ must have $\mathbf{q}^*>\pmb{0}$. This completes the proof under $\pmb{\lambda} \to \pmb{0}$.

When $\pmb{\psi} \to \pmb{\infty}$, again fixed points can only occur when $q_i \geq -\frac{4z d_i}{\psi_i}$ for all $i$. When $\pmb{\psi} \to \pmb{\infty}$, $-\frac{4z d_i}{\psi_i} \to 0$. This means that fixing all model parameters other than $\pmb{\psi}$, when $\pmb{\psi} \to \pmb{\infty}$, (\ref{p2-1}) holds for any $\mathbf{q}$ such that $q_i \in [ -\frac{4z d_i}{\psi_i} , 0 ]$, $\forall i$. The statement then follows based on similar argument as in the case of $\pmb{\lambda} \to \pmb{0}$.
\end{proof}

Proposition \ref{pro2} shows that when agents are very inaccurate (low $\pmb{\lambda}$) or when they are very forgetful (high $\pmb{\psi}$), then the long-term action pattern must be that all agents prefer the risk-dominant option to the efficient option, regardless of their initial attractions towards the two choices. Thus, even if all agents are in favour of the efficient option initially, they will favour the risk-dominant option more than the efficient option ultimately. 

The intuition is that high inaccuracy in action (low $\pmb{\lambda}$) and high depreciation of past attractions (high $\pmb{\psi}$) will attenuate agents' attractions to the efficient outcome. Since the efficient option is risky in the setting, agents will not play it when the overall tendency of playing it in the population is not strong enough, which will gradually drive the whole population to favour the risk-dominant option. Note also that in contrast, if all players favour the risk-dominant option, then even if that attraction is not strong, people will continue to favour it precisely due to its risk dominance (when others randomly choose actions, a player gains by choosing the risk-dominant option as compared to selecting the other choice).

The following example illustrates the ideas in a 2-player game.
\\
\par\noindent\begin{example}\textbf{1:}
Figure \ref{vector} shows the vector field of a two-player coordination game for different values of parameters. The payoffs are $\Pi_{00}=z=4$, $\Pi_{01}=y=-2$, $\Pi_{10}=x=1$, and $\Pi_{11}=w=2$. Thus, action $D$ is the risk-dominant option while action $C$ is the efficient option. Suppose $\eta_1=\eta_2=1$. Figure \ref{vector}-(a) and \ref{vector}-(b) show a bifurcation of the system where there are two behavioural equilibria\footnote{There are three fixed points in Figure \ref{vector}-(a), but only the bottom left and the upper right are stable. The middle fixed point is unstable.} in Figure \ref{vector}-(a) when $\pmb{\psi}$ is small and $\pmb{\lambda}$ is large ($\psi_1=\psi_2=0.5$, $\lambda_1=\lambda_2=1$) and there is a unique behavioural equilibrium in Figure \ref{vector}-(b) when $\pmb{\psi}$ is large and $\pmb{\lambda}$ is small ($\psi_1=\psi_2=1$, $\lambda_1=\lambda_2=0.5$).

\begin{figure}[H]
  \centering
  \caption{Vector field of a two-player game}
  \begin{minipage}[b]{0.49\textwidth}
  \subcaption{$\psi_i=0.5$, $\lambda_i=1$}
    \includegraphics[width=\textwidth]{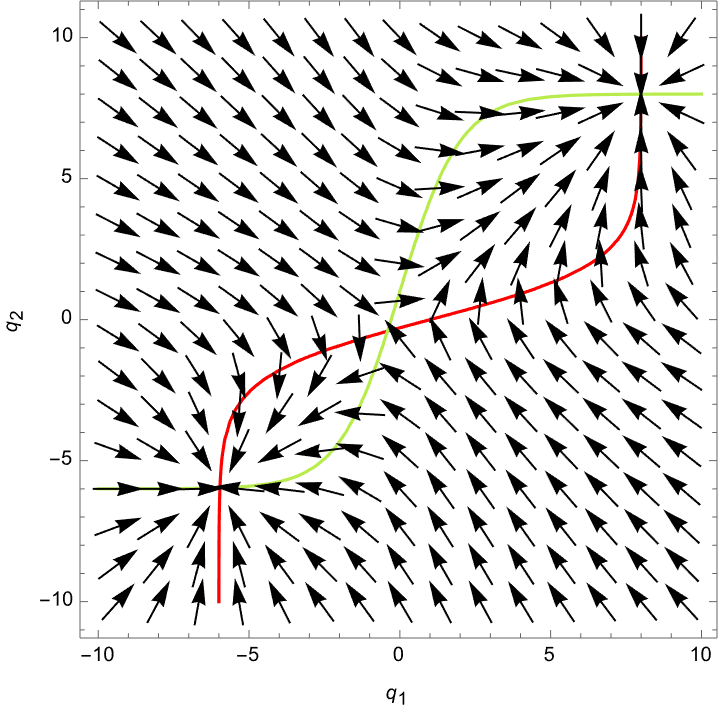}
  \end{minipage}
  \hfill
  \begin{minipage}[b]{0.49\textwidth}
  \subcaption{$\psi_i=1$, $\lambda_i=0.5$}
    \includegraphics[width=\textwidth]{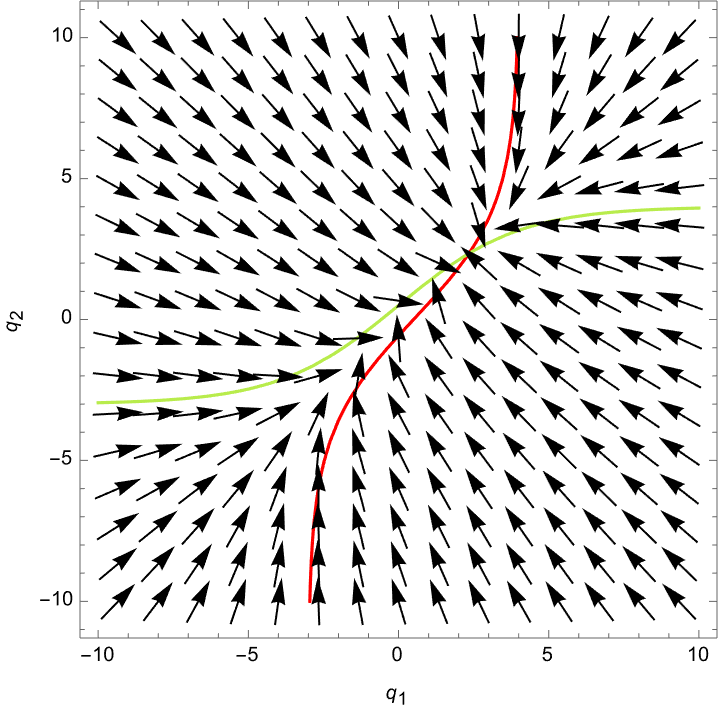}
  \end{minipage}
   \label{vector}
 \caption*{\small Notes: These Figures plot the vector field of the dynamical system \ref{big q} under two different parameter values. The green curve represents the isocline $\dot{q}_2$=0 while the red curve represents the isocline $\dot{q}_1$=0. The intersections of the two curves are fixed points. The black arrows represent the direction of ($\dot{q}_1, \dot{q}_2$). The Figures are plotted in Mathematica.}
\end{figure}

In Figure \ref{vector} -(a), when the two players have relatively high memory and high decision accuracy, they can favour either action in the long run depending on their initial attractions. In Figure \ref{vector} -(b), however, no matter how much agents favour the efficient option initially, that attraction will erode overtime and ultimately be surpassed by the risk-dominant option. Note that this change from two stable fixed points (plus one unstable fixed point) to one stable fixed point as parameters vary is well known as Pitchfork bifurcation in dynamical systems (e.g., \citet{strogatz2018nonlinear}).

\end{example}

Recall that in the stochastic stability literature (e.g., \citet{blume1993statistical}, \citet{young2001competition}, etc.), all players coordinating on the risk-dominant option is the unique stochastically stable state in network coordination games under log-linear best responses. An assumption embedded in the concept of stochastic stability is that the error in decision making is arbitrarily small. In contrast, Proposition \ref{pro2} shows that the selection of the risk-dominant choice can also occur when the error in decision making is large (low $\pmb{\lambda}$). In essence, the selection of the risk-dominant choice under these two opposite conditions lies in different (but intrinsically equivalent) properties of the risk-dominant option. The reason for risk-dominant option being a stochastically stable state is that an agent needs less than (more than) half of neighbours playing the risk-dominant (efficient) option in order for her optimal action to be playing the risk-dominant (efficient) choice. In this paper, risk-dominant option being a unique BE (under low accuracy) lies in that it is optimal for an agent to choose the risk-dominant option if others are randomising actions.

I next consider the opposite case where agents have high memory ($\pmb{\psi} \to \pmb{0}$) or high accuracy ($\pmb{\lambda} \to \pmb{\infty}$). It can be shown that in many scenarios under this limit, the behavioural equilibrium will be arbitrarily close to pure-strategy profile. Intuitively, if $\pmb{\psi} \to \pmb{0}$, then an agent's attraction difference can be arbitrarily large in the long run and therefore she will choose one of the actions with (near) certainty, and if $\pmb{\lambda} \to \pmb{\infty}$, then an agent will choose (almost) pure strategy unless she has the same attraction of the two choices.

A question is then which pure-strategy profiles can be ``approximately" played in the long term when $\pmb{\psi} \to \pmb{0}$ or $\pmb{\lambda} \to \pmb{\infty}$. Especially, do those profiles include Nash equilibrium? How do the answers depend on other model parameters excluding $\pmb{\psi}$ and $\pmb{\lambda}$?

More formally, recall that the set of BE is $\mathcal{B}$. Consider the distance between a pure-strategy profile $\mathbf{s}$ and the set BE as $d(\mathbf{s}, \mathcal{B} ):= \inf \{d(\mathbf{s}, \mathbf{p}^*) | \mathbf{p}^* \in  \mathcal{B} \}$, where $d(\cdot, \cdot)$ is the Euclidean distance between two vectors. Thus, the distance between a pure-strategy profile $\mathbf{s}$ and a set of BE is the infimum of the distances between $\mathbf{s}$ and strategy profiles in the set of BE.

Recall that $\mathcal{B}$ depends on the model parameters $\Gamma$ and $\mathcal{B}(\Gamma)$ explicitly reflects this dependence. To simplify further expression, I define the concept of limiting behavioural equilibrium as follows.

\begin{definition}
For a pure strategy profile $\mathbf{s}$, if $\lim_{\Gamma \to \Gamma_0} d(\mathbf{s}, \mathcal{B}(\Gamma) ) = 0$, then it is said that $\mathbf{s}$ is a limiting behavioural equilibrium (or limiting BE) as $\Gamma \to \Gamma_0$. The number of limiting BE as $\Gamma \to \Gamma_0$ is then the number of pure-strategy profiles that are a limiting BE as $\Gamma \to \Gamma_0$.
\end{definition}

\begin{proposition}
\label{pro: eta}
Suppose Assumptions \ref{ass: payoff-basic}-\ref{ass:finite} hold. Then for any given network $G$, as $\pmb{\psi} \to \pmb{0}$ or $\pmb{\lambda} \to \pmb{\infty}$ (or both), we have that\\
(1) If $wz>xy$, then any strict pure-strategy Nash equilibrium $\mathbf{s}^*$ is a limiting BE. Moreover, the number of limiting BE is (weakly) decreasing with $\pmb{\eta}$.\\
(2) If $wz< xy$, then any limiting BE is a Nash equilibrium. Moreover, the number of limiting BE is (weakly) increasing with $\pmb{\eta}$.
\end{proposition}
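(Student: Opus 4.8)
The plan is to reduce the limiting question to a static one: identify, for each pure profile $\mathbf{s}\in\{0,1\}^n$, a set of ``support inequalities'' that are necessary and (generically) sufficient for $\mathbf{s}$ to be a limiting BE, and then compare these with the $r$-cohesiveness conditions that characterise Nash equilibria (\cite{morris2000contagion}). Fix $\mathbf{s}$ and write $A_i:=\sum_{j\in N_i}\big(s_jw+(1-s_j)x\big)$ and $B_i:=\sum_{j\in N_i}\big(s_jy+(1-s_j)z\big)$, i.e.\ agent $i$'s payoff from $D$ and from $C$ when her neighbours play $\mathbf{s}_{-i}$; let $m_i$ be the number of neighbours of $i$ playing $D$ and $d_i=|N_i|$. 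Evaluating (\ref{big q}) at a fixed point gives $\psi_i q_i^\ast=G_i(\mathbf{p}^\ast)$, where $G_i(\mathbf{p})=(p_i+\eta_i(1-p_i))\sum_{j\in N_i}(p_jw+(1-p_j)x)-(1-p_i+\eta_i p_i)\sum_{j\in N_i}(p_jy+(1-p_j)z)$ is the non-depreciation part of $\dot q_i$. If $\mathbf{p}(\mathbf{q}^\ast)\to\mathbf{s}$ along the limit then, for $s_i=1$, $q_i^\ast\to+\infty$ (when $\pmb{\psi}\to\pmb0$) or stays positive and bounded (when $\pmb{\lambda}\to\pmb\infty$, with $q_i^\ast\to(A_i-\eta_i B_i)/\psi_i$), so $\psi_i q_i^\ast\ge 0$ in the limit and, since $G_i(\mathbf{p}^\ast)\to A_i-\eta_i B_i$, we get $A_i\ge\eta_i B_i$; symmetrically $s_i=0$ forces $B_i\ge\eta_i A_i$. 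This is necessity. For sufficiency, assume the strict versions $A_i>\eta_i B_i$ (for $s_i=1$) and $B_i>\eta_i A_i$ (for $s_i=0$). Then $G_i$ has the correct sign and is bounded away from $0$ on a $\delta$-neighbourhood of $\mathbf{s}$ in $\mathbf{p}$-space; take the order interval $[\mathbf{a},\mathbf{b}]$ in $\mathbf{q}$-space whose $\mathbf{p}$-image is that neighbourhood, scaling the coordinate where $\mathbf{p}$ must be near $0$ or $1$ like $M/\psi_i$ with $M$ an upper bound on $|G_i|$. Because the system is cooperative (eq.\ (\ref{coop})), forward-invariance of $[\mathbf{a},\mathbf{b}]$ reduces to $F(\mathbf{a})\ge\pmb0\ge F(\mathbf{b})$, which holds once $\pmb{\psi}$ is small enough (or $\pmb{\lambda}$ large enough); by Proposition~\ref{pro1} (cooperative, irreducible, bounded trajectories) the interval contains a stable fixed point, hence a BE, whose $\mathbf{p}$-image is within $\delta$ of $\mathbf{s}$. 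Letting $\delta\to0$ as $\pmb{\psi}\to\pmb0$ (or $\pmb{\lambda}\to\pmb\infty$) gives $d(\mathbf{s},\mathcal{B}(\Gamma))\to0$. So, away from the non-generic equalities $A_i=\eta_i B_i$, $\mathbf{s}$ is a limiting BE iff it satisfies the support inequalities.

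It then remains to compare the support inequalities with the Nash conditions $A_i\ge B_i$ (for $s_i=1$) and $B_i\ge A_i$ (for $s_i=0$) --- strict for a strict NE --- via an elementary sign analysis. The key observation is: if $s_i=1$ with $m_i\ge1$, then $A_i<0$ forces $m_i w<(d_i-m_i)|x|$ and $B_i<0$ forces $(d_i-m_i)z<m_i|y|$, and multiplying these yields $wz<xy$. Hence when $wz>xy$, $A_i$ and $B_i$ cannot both be negative, and one checks directly that $A_i>B_i$ (resp.\ $B_i>A_i$) then implies $A_i>\eta_i B_i$ (resp.\ $B_i>\eta_i A_i$) for every $\eta_i\in[0,1]$, so every strict NE satisfies the support inequalities and is a limiting BE; moreover if $\mathbf{s}$ satisfies them at $\pmb{\eta}$ it also does at any $\pmb{\eta}'\le\pmb{\eta}$, which gives the monotone decrease in part~(1). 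When $wz<xy$ one has $x,y<0$, and the same arithmetic run in reverse shows that the (non-strict) support inequalities $A_i\ge\eta_i B_i$, $B_i\ge\eta_i A_i$ together with $xy>wz$ force $A_i\ge B_i$ (resp.\ $B_i\ge A_i$) --- so any limiting BE is a Nash equilibrium --- while the support inequalities are now preserved under increasing $\pmb{\eta}$, giving part~(2). For counting limiting BE one works in the generic case where the support inequalities are strict; the equalities $A_i=\eta_i B_i$ and $A_i=B_i$ are non-generic and can be handled by a next-order expansion of $G_i$ near $\mathbf{s}$ or set aside.

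The main obstacle is the sufficiency half: converting ``the vector field points inward on an order interval'' into the existence of a genuinely \emph{stable} fixed point --- a BE in the sense of Definition~\ref{def:be} --- whose $\mathbf{p}$-coordinate is arbitrarily close to $\mathbf{s}$, \emph{uniformly} as the parameters approach the limit. This relies on the monotone/cooperative structure (so that an inward-pointing order interval traps a stable equilibrium, not merely some invariant set) together with boundedness from $\pmb{\psi}>\pmb0$ via Proposition~\ref{pro1}, and on checking that the $\pmb{\psi}$- (or $\pmb{\lambda}$-) dependent rescaling of the interval keeps its $\mathbf{p}$-image shrinking to $\{\mathbf{s}\}$. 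By contrast, the arithmetic linking $\mathrm{sgn}(wz-xy)$ to the admissible sign patterns of $(A_i,B_i)$, and thence to the comparison with Nash equilibrium and the monotonicity in $\pmb{\eta}$, is the conceptual crux but is routine once spotted.
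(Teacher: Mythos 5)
Your proposal is correct, and its overall decomposition matches the paper's: derive ``support inequalities'' as necessary conditions from the fixed-point equation, show they are (generically) sufficient by producing a stable fixed point near $\mathbf{s}$, and then compare them algebraically with the Nash/cohesiveness conditions. Your support inequalities $A_i>\eta_iB_i$ (for $s_i=1$) and $B_i>\eta_iA_i$ (for $s_i=0$) are exactly the paper's threshold conditions, since $A_i-\eta_iB_i=d_i\bigl(w-x+\eta_i(z-y)\bigr)\bigl(\tfrac{m_i}{d_i}-r_1(\eta_i)\bigr)$ with $r_1(\eta_i)=\tfrac{\eta_iz-x}{w-x+\eta_iz-\eta_iy}$ as in (\ref{p1-1})--(\ref{p1-2}). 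You diverge from the paper in two localized places. First, for sufficiency the paper builds a self-map $H$ on a compact box in $\mathbf{p}$-space from the fixed-point equation (\ref{exp: pure}) and invokes Brouwer, whereas you trap a fixed point in an inward-pointing order interval and lean on the cooperative structure (\ref{coop}); both work, and your route makes the stability claim (which the paper relegates to a footnote) more transparent, though to fully match Definition~\ref{def:be} you should still note that near a pure profile $\lambda_ip_i(1-p_i)\to0$, so the Jacobian approaches $-\mathrm{diag}(\pmb{\psi})$ and the trapped fixed point is asymptotically stable. Second, for the comparison with Nash and the $\pmb{\eta}$-monotonicity, the paper differentiates the thresholds and reads off $\partial r_k/\partial\eta_i\propto wz-xy$ together with $r_1(1)=r$, while you run a sign analysis on $(A_i,B_i)$ (multiplying $A_i<0$ and $B_i<0$ to get $wz<xy$, etc.); these are equivalent, and your version arguably makes the role of $\mathrm{sgn}(wz-xy)$ more intuitive, at the cost of a small amount of boundary bookkeeping (the cases $m_i\in\{0,d_i\}$ and $A_i=0$) that the threshold formulation absorbs automatically. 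The paper's calculation buys a closed-form cohesiveness characterisation (used again in Corollary~\ref{cor1}); your product argument buys a payoff-level explanation of why the direction of the monotonicity flips at $wz=xy$.
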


The proof is in Appendix. 

Proposition \ref{pro: eta} says that when agents have very long memory or very high accuracy, the set of possible long-term action profiles depends on the payoff matrix and the parameter $\pmb{\eta}$ --- the weight on unselected action when updating attractions. Specifically, if $wz > xy$ (mis-coordination payoffs are not both very negative), then any strict NE, loosely speaking, can be supported in the behavioural model. That is, the long-term action pattern of the dynamical game can be ``close" to any Nash equilibrium starting with some initial conditions. Also, the number of possible long-term actions is (weakly) decreasing with $\pmb{\eta}$, meaning that if people pay less attention to forgone payoffs, then the possibility of long-term action patterns becomes richer.

The result is the opposite when $wz < xy$ (mis-coordination results in very negative payoffs), in which case any long-term action pattern in this behavioural model must be a Nash equilibrium. In particular, some Nash equilibrium may not be supported by the behavioural model. Also, the number of possible long-term actions is (weakly) increasing with $\pmb{\eta}$, suggesting that if people pay more attention to unselected choices, then the possibility of long-term action patterns becomes richer.

The intuition behind these is that when an agent pays little attention to unselected option (low $\eta_i$), he only focuses on the payoff he got without comparing it to the payoff he could have gained had he chose the other strategy. When $wz > xy$, the value of mis-coordination is not very negative (recall that $x<w$ and $y<z$), so agents could still get a ``satisfactory" payoff when they did not make a best response, which indicates that some non-Nash strategy profile may be supported as a long-term action profile. Also, the less agents pay attention to forgone payoffs, the richer the possible long-term action profile could be. On the other hand, if $wz <xy$ meaning large payoff loss in the case of mis-coordination, then people are not satisfied with the results of some partial coordination profile if they pay little attention to unselected option, which induces a strict requirement for a profile to be a stationary state. 

Let me summarise some simple corollaries from Proposition \ref{pro: eta}.

\begin{corollary}
\label{cor1}
Suppose Assumptions \ref{ass: payoff-basic}-\ref{ass:finite} hold. Then as $\pmb{\psi} \to \pmb{0}$ or $\pmb{\lambda} \to \pmb{\infty}$, \\
(1) Suppose $\pmb{\eta}=\pmb{0}$. Let $\mathbf{s}$ be a pure-strategy profile in which $s_i=0$ for any $i \in N_C$ and $s_i=1$ for any $i \in N_D := N \setminus N_C$. Then, $\mathbf{s}$ is a limiting BE if and only if $N_C$ is strictly $r_2$-cohesive and $N_D$ is strictly $r_1$-cohesive, where $r_2=\frac{-y}{z-y}$ and $r_1=\frac{-x}{w-x}$.\\
(2) If $wz>xy$, then the number of $BE$ is (weakly) larger than the number of $NE$.\\
(3) The number of $BE$ is at least $2$.
\end{corollary}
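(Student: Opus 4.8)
The three parts all reduce to one mechanism, the same one that drives Proposition~\ref{pro: eta}: for a pure-strategy profile $\mathbf s$, decide whether the reduced system~(\ref{big q}) has a \emph{stable} fixed point converging to $\mathbf s$ as $\pmb\psi\to\pmb 0$ or $\pmb\lambda\to\pmb\infty$. Write $d_i=|N_i|$, $n^i_D=|N_i\cap N_D|$, $n^i_C=|N_i\cap N_C|$. The crucial observation is that along any such sequence, at a candidate fixed point $\mathbf q^*$ with coordinates bounded away from $0$ one has $\lambda_i q_i^*\to\pm\infty$, so $p_j\to s_j$ for every $j$ by~(\ref{att to str}); substituting $p_j=s_j$ into $F_i(\mathbf q^*)=0$ then fixes the limiting value, or at least the sign, of each $q_i^*$. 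Self-consistency --- $q_i^*>0$ precisely when $s_i=1$ --- is what must be recast as cohesiveness, and local stability is checked separately.

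For part~(1), set $\pmb\eta=\pmb 0$, so~(\ref{big q}) becomes the reinforcement-learning form~(\ref{reinforce}). Solving $\dot q_i=0$ with $p_j\to s_j$ gives $\psi_i q_i^*\to n^i_D w+n^i_C x$ for $i\in N_D$ and $\psi_i q_i^*\to-(n^i_D y+n^i_C z)$ for $i\in N_C$. Dividing by $d_i$ and using $w-x>0$, $z-y>0$, the condition $q_i^*>0$ for $i\in N_D$ is exactly that the fraction of $i$'s neighbours lying in $N_D$ exceed $\tfrac{-x}{w-x}=r_1$, and $q_i^*<0$ for $i\in N_C$ is exactly that the fraction of $i$'s neighbours in $N_C$ exceed $\tfrac{-y}{z-y}=r_2$ --- i.e.\ strict $r_1$-cohesiveness of $N_D$ and strict $r_2$-cohesiveness of $N_C$. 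If either inequality holds only weakly or fails, the corresponding $q_i^*$ is forced to the wrong sign (or to $0$), so $p_i\not\to s_i$ and no nearby fixed point exists; this gives the ``only if'' direction and shows why strictness is needed. For ``if'', strictness keeps the candidate coordinates $q_i^\dagger$ bounded away from $0$, so a contraction/continuity argument produces an honest fixed point near $\mathbf q^\dagger$ for $\pmb\lambda$ large (or $\pmb\psi$ small), whose Jacobian tends to $-\mathrm{diag}(\pmb\psi)$ because every off-diagonal entry and the noise term in every diagonal entry is $O(\lambda_i p_i(1-p_i))=O(\lambda_i e^{-\lambda_i q_i^\dagger})\to 0$; the fixed point is thus asymptotically stable, i.e.\ a BE, so $\mathbf s$ is a limiting BE.

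Part~(2) follows from Proposition~\ref{pro: eta}(1): when $wz>xy$ every strict pure NE is a limiting BE, and distinct pure profiles have distinct limiting BE, so there are at least as many limiting BE as strict pure NE. To extend the count from strict NE to all pure NE, repeat the previous computation for general $\pmb\eta$: the cohesiveness threshold a limiting BE imposes on $N_D$ is $r_1(\eta_i)=\tfrac{\eta_i z-x}{\,w-x+\eta_i(z-y)\,}$, with $\tfrac{d r_1}{d\eta_i}=\tfrac{wz-xy}{(\,w-x+\eta_i(z-y)\,)^2}>0$, so $r_1(\eta_i)\le r_1(1)=r$ on $[0,1]$ (and symmetrically the threshold on $N_C$ never exceeds $1-r$); hence every pure NE already meets the cohesiveness a limiting BE requires. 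Part~(3) needs no hypothesis on $wz$ versus $xy$: at the all-$D$ profile~(\ref{big q}) gives $\psi_i q_i^*\to d_i(w-\eta_i y)>0$, since $\eta_i y\le\max\{y,0\}<w$ by $w>y$ and $w>0$; at the all-$C$ profile it gives $\psi_i q_i^*\to d_i(\eta_i x-z)<0$, since $\eta_i x\le\max\{x,0\}<z$ by $z>x$ and $z>0$. Both candidates are self-consistent and, by the Jacobian argument above, stable, for every network and every $\pmb\eta$; so all-$C$ and all-$D$ are always limiting BE and, being distinct, give at least two.

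The main obstacle is the ``if'' direction of part~(1) (and the stability claim inside part~(3)): ``limiting BE'' is defined through \emph{stable} fixed points, whereas Proposition~\ref{pro1} only guarantees generic convergence to \emph{some} fixed point, so one genuinely must verify that the candidate point attracts. The resolution --- that near a pure profile with $q_i^*$ bounded away from $0$ the logistic slope $\lambda_i p_i(1-p_i)$ vanishes in the limit, collapsing the Jacobian to the stable matrix $-\mathrm{diag}(\pmb\psi)$ --- is short but is the one analytic step; everything else is bookkeeping with cohesiveness inequalities and appeals to Proposition~\ref{pro: eta}. A minor care-point is the non-generic case $\pmb\eta=\pmb 1$ with an exactly-indifferent agent, where a non-strict NE may fail to be a limiting BE, so in that case part~(2) should be read as a comparison with strict pure NE.
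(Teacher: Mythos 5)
Your proof is correct and follows essentially the same route as the paper's: part (1) specialises the cohesiveness thresholds $r_1(\eta_i),r_2(\eta_i)$ from the proof of Proposition~\ref{pro: eta} to $\eta_i=0$, part (2) invokes Proposition~\ref{pro: eta}(1), and part (3) amounts to checking that both thresholds are below $1$ (equivalently $w-\eta_i y>0$ and $z-\eta_i x>0$). The only difference is that you spell out the Jacobian-based stability check and the strict-versus-non-strict NE subtlety in part (2), both of which the paper leaves implicit (the former in a footnote, the latter not at all), so your version is if anything slightly more careful.
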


The proof is in Appendix.

Corollary \ref{cor1}-(1) says that if people do not consider forgone payoffs when updating attractions (reinforcement learning) and have high memory and high accuracy, then the set of pure strategy profiles that can be played in the long run is described by the concept of network cohesiveness as in the NE case (\cite{morris2000contagion}), but the threshold $r_1$ and $r_2$ are generally different from those in NE.

Corollary \ref{cor1}-(2) and -(3) consider the number of behavioural equilibria and serve as a comparison with Proposition \ref{pro2} showing how that number differs across the two extreme cases of $\pmb{\psi}$ and $\pmb{\lambda}$. That is, (1) when $\pmb{\psi}$ is large (forgetful) or $\pmb{\lambda}$ is small (inaccurate), then all agents will be in favour or the risk-dominant option in the long run regardless of their initial attractions (Proposition \ref{pro2}). (2), when $\pmb{\psi}$ is small (retentive) or $\pmb{\lambda}$ is large (accurate), then the number of long-term action patterns can be larger than that of NE (Corollary \ref{cor1}-(2)). In particular, it is at least two (Corollary \ref{cor1}-(3)), corresponding to all agents coordinating on either action.

The number of BE is greater when $\pmb{\psi} \to \pmb{0}$ or $\pmb{\lambda} \to \pmb{\infty}$ than when $\pmb{\psi} \to \pmb{\infty}$ or $\pmb{\lambda} \to \pmb{0}$ since when $\pmb{\psi}$ decreases and $\pmb{\lambda}$ increases, people's initial attractions tend to be more ``important" due to increased memory and decision accuracy. The implication is that any pure-strategy profile that is an equilibrium in the static model can be played as a long-term action profile in the dynamic game by ``choosing" the initial attractions consistent with that pure-strategy profile. On the other hand, under low memory and low decision accuracy, people's mixed action profile induced by their attractions may not be strong enough to support some rather ``risky" pure-strategy equilibrium, driving the population to the risk-dominant action.

A question is then what the properties of BE are if $\pmb{\psi}$ and $\pmb{\lambda}$ are between the two extreme cases $\pmb{0}$ and $\pmb{\infty}$. Following the above intuition, one may ask whether the number of BE is decreasing with $\pmb{\psi} $ and increasing with $\pmb{\lambda}$. The answer is negative as summarised in the next Proposition. In particular, it is possible that the number of BE under some values of $\pmb{\psi} $ and $\pmb{\lambda} $ is greater than that when $\pmb{\psi}  \to \pmb{0}$ and $\pmb{\lambda}  \to \pmb{\infty}$.

\begin{proposition}
\label{pro4}
Suppose Assumptions \ref{ass: payoff-basic}-\ref{ass:finite} hold. The number of BE need not be monotonic in $\pmb{\psi}$ or in $\pmb{\lambda}$.
\end{proposition}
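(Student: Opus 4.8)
The plan is to prove Proposition \ref{pro4} by an explicit counterexample, with the work reduced to producing a single saddle--node (fold) bifurcation. Fix a network $G$, payoffs $\Pi$ satisfying Assumptions \ref{ass: payoff-basic}--\ref{ass: risk-effi}, and a weight vector $\pmb{\eta}$, and consider the one-parameter family $\pmb{\psi}=\psi\,\mathbf{1}$, $\psi\in(0,\infty)$, with $G,\Pi,\pmb{\eta},\pmb{\lambda}$ held fixed. By the proof of Proposition \ref{pro1} each $\dot{\mathbf{q}}=F(\mathbf{q};\psi)$ is cooperative and irreducible with forward orbits uniformly bounded on compact $\psi$-ranges, so $g(\psi):=|\mathcal{B}(\psi)|$ is finite and at least $1$ for every $\psi$. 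The key observation is that the endpoints pin down the direction of any monotonicity: by Proposition \ref{pro2} (for $\psi$ large the fixed-point equation $\mathbf{q}=H(\mathbf{q})/\psi$ is a contraction near $\mathbf{q}=\mathbf{0}$, so there is a unique, stable BE) $g(\psi)=1$ for all large $\psi$, while by Corollary \ref{cor1}-(3) $g(\psi)\ge 2$ for all small $\psi$; hence a monotone $g$ would have to be weakly \emph{decreasing}. Therefore it suffices to exhibit $0<\psi_a<\psi_b$ with $g(\psi_a)<g(\psi_b)$, i.e.\ a window in which \emph{increasing} $\psi$ \emph{creates} a behavioural equilibrium. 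For $\pmb{\lambda}$ the mirror argument applies: Proposition \ref{pro2} gives $g=1$ for $\pmb{\lambda}$ small and Corollary \ref{cor1}-(3) (via Proposition \ref{pro: eta}) gives $g\ge 2$ for $\pmb{\lambda}$ large, so a monotone $g$ along $\pmb{\lambda}=\lambda\,\mathbf{1}$ would be weakly \emph{increasing}, and it suffices to find $\lambda_a<\lambda_b$ with $g(\lambda_a)>g(\lambda_b)$.

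To produce such a window I would take a small, hand-computable example --- two or three agents with heterogeneous $\eta_i$ (or a mildly asymmetric network), the heterogeneity being needed to break the symmetry that would otherwise force a two-agent symmetric game into a monotone pitchfork --- chosen so that (i) at a small $\psi_a$ the equation $F(\mathbf{q};\psi_a)=\mathbf{0}$ has exactly the two solutions near the all-$C$ and all-$D$ pure profiles, both hyperbolic and stable (inheriting hyperbolicity from the $\psi\to\mathbf{0}$ limit), so $g(\psi_a)=2$; and (ii) some $\psi^\dagger\in(\psi_a,\infty)$ is a fold point. For (ii) I would continue fixed points in $\psi$ by the implicit function theorem, locate $(\mathbf{q}^\dagger,\psi^\dagger)$ at which $DF(\mathbf{q}^\dagger;\psi^\dagger)$ is singular, and exploit that the system is cooperative and irreducible so that Perron--Frobenius applies to a translate of $DF$: its eigenvalue of largest real part is real, simple, and continuous in $(\mathbf{q},\psi)$, whence stability is governed by a single scalar and the zero eigenvalue at $\psi^\dagger$ is that dominant one. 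Verifying the two fold non-degeneracy conditions --- a nonzero quadratic coefficient along the critical positive eigenvector, a sign condition on the payoff-weighted curvatures $\partial^2 F_i/\partial q_j\partial q_k$, and $\partial F_i/\partial\psi=-q_i\neq 0$ --- then shows that two fixed points, one of them stable, exist on one side of $\psi^\dagger$ and none on the other, so for $\psi_b$ on the appropriate side $g(\psi_b)\ge 3>2=g(\psi_a)$. Combined with the reduction this gives non-monotonicity in $\pmb{\psi}$; rescaling the same construction into a $\pmb{\lambda}$-family (or building the analogous reverse fold) handles $\pmb{\lambda}$.

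I expect the main obstacle to be step (ii): establishing the fold \emph{rigorously} --- not merely from a numerical bifurcation diagram --- and in particular confirming that the branch born at $\psi^\dagger$ is genuinely \emph{stable}, and hence a BE, rather than a saddle, since only stable fixed points are counted in $g$. This is exactly why the example must be small enough that $DF$ is a $2\times2$ or $3\times3$ matrix: with the closed forms for $F$ in (\ref{big q}) and its derivatives in (\ref{coop}), the spectrum of $DF$ near the fold can be computed and signed by hand, and the Perron--Frobenius structure collapses the stability check to one real quantity. A quicker but less self-contained alternative, in the spirit of the paper's later numerical sections, is to display one explicit example together with its bifurcation diagram and invoke continuity of the finitely many hyperbolic equilibria to make $g(\cdot)$ exact and locally constant on open intervals around $\psi_a$ and $\psi_b$ (and around $\lambda_a,\lambda_b$).
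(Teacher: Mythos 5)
Your overall strategy---refute monotonicity by an explicit counterexample, using the endpoint behaviour (Proposition \ref{pro2} forcing a unique BE for large $\pmb{\psi}$ or small $\pmb{\lambda}$, and Corollary \ref{cor1}-(3) giving at least two BE at the opposite extreme) to reduce the task to finding an interior window where the count \emph{rises}---is exactly the logic of the paper's proof, and your closing remark about the ``quicker but less self-contained alternative'' describes what the paper actually does: it fixes a six-node network, a payoff matrix, $\pmb{\eta}=\mathbf{1}$, and exhibits five numerically computed parameter configurations in which the BE count is $2$, then $3$, then $1$ along a monotone increase of $\pmb{\psi}$ (and symmetrically along a monotone decrease of $\pmb{\lambda}$). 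Two points of comparison are worth noting. First, the paper's interior window is created not by moving along a homogeneous ray $\pmb{\psi}=\psi\mathbf{1}$ but by raising a \emph{single} agent's $\psi_i$ (or lowering a single $\lambda_i$), which makes that agent's play noisy enough to sustain a third, partially coordinated stable profile; this matches your intuition that heterogeneity is needed to escape a plain pitchfork, though the heterogeneity enters through $\pmb{\psi}$ itself rather than through $\pmb{\eta}$. Second, your proposed rigorous route---continuation of fixed points, Perron--Frobenius reduction of stability to the dominant real eigenvalue of the cooperative Jacobian, and verification of fold non-degeneracy---is sound in outline and would yield a more self-contained argument than the paper's, but as written it is a programme rather than a proof: no explicit network, payoffs, or fold point are produced, and the step you yourself flag (verifying that the branch born at the fold is stable, hence counted in $\mathcal{B}$) is precisely the part that is not carried out. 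The paper sidesteps this by direct numerical computation of the stable fixed points; if you want to match its level of completeness you should either supply the concrete example and its computed equilibria, or finish the $2\times 2$ or $3\times 3$ eigenvalue calculation you sketch.
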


\begin{proof}
Show by finding a counter example as follows. 

Suppose the payoff matrix is as in Table \ref{table-exp} and the network structure is as in Figure \ref{exp-net}. Note that one can check that there are only two pure-strategy Nash equilibria of this game: (1) all agents play action $C$, and (2) all agents play action $D$.
\begin{table}[h!]
\centering
\caption{a symmetric coordination game}
\label{table-exp}
 \begin{tabular}{|c |c | c |} 
 \hline
  & $C$ & $D$ \\ [0.5ex]
 \hline
 $C$ & 3 & -5 \\ [0.5ex]  \hline
 $D$ & 0 & 2 \\
  [1ex] 
 \hline
 \end{tabular}
\end{table}

\begin{figure}[H]
  \centering
  \caption{network structure}
  \label{exp-net}
    \includegraphics[width=0.6\textwidth]{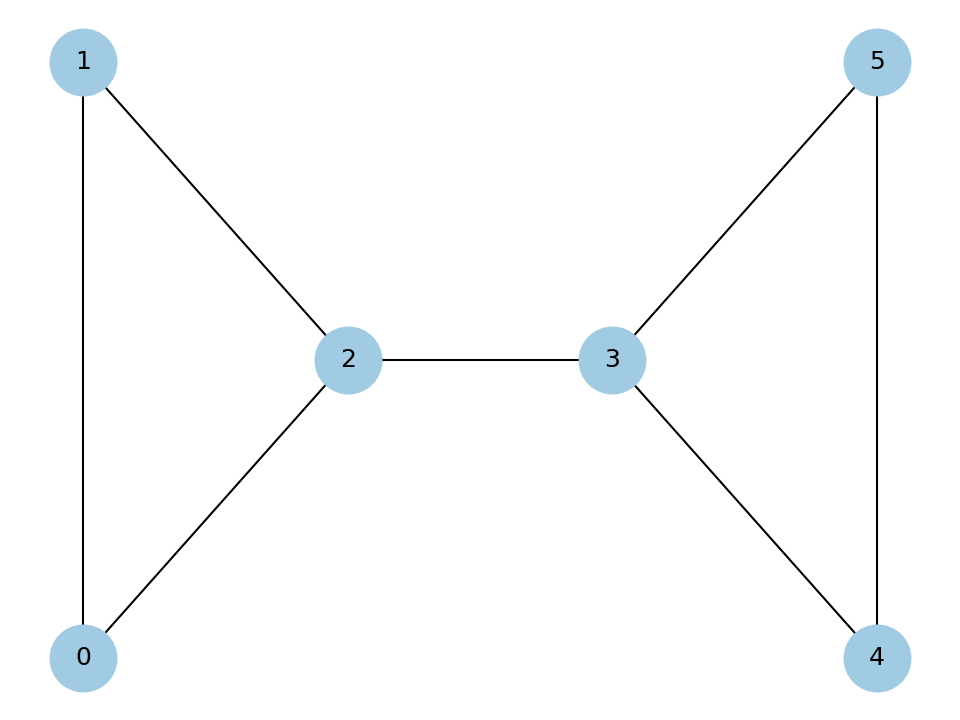}\label{exp-net}
\end{figure}

Suppose all agents conduct belief-based learning (i.e., fix $\pmb{\eta} = \mathbf{1} = [1, 1, 1, 1, 1, 1]$), then consider the following cases.
\\
\par\noindent\textit{Case 1:} When $\pmb{\psi}$ is small and $\pmb{\lambda}$ is large, there are two equilibria. For example, suppose $\pmb{\psi}^1 = [0.5, 0.5, 0.5, 0.5, 0.5, 0.5]$ and $\pmb{\lambda}^1 = [2, 2, 2, 2, 2, 2]$, then there are exactly two BE:
$$ BE_1 (\pmb{\psi}^1, \pmb{\lambda}^1) : \mathbf{p}^* \approx [0, 0, 0, 0, 0, 0] $$ $$ BE_2 (\pmb{\psi}^1, \pmb{\lambda}^1): \mathbf{p}^* \approx [1, 1, 1, 1, 1, 1] $$
\par\noindent\textit{Case 2:} 
When $\pmb{\psi}$ is large enough or $\pmb{\lambda}$ is small enough, then there is a unique BE (in which all agents play the risk-dominant option with probability larger than $0.5$). For example, suppose $\pmb{\psi}^2 = [15, 15, 15, 15, 15, 15]$ and $\pmb{\lambda}^2 = [2, 2, 2, 2, 2, 2]$, then the unique BE is:
$$ BE (\pmb{\psi}^2, \pmb{\lambda}^2) :  \mathbf{p}^* \approx [0.82, 0.82, 0.90, 0.90, 0.82, 0.82] $$
\par\noindent\textit{Case 3:} when $\pmb{\psi}^3 = [0.5, 0.5,  0.5, 0.5, 0.5, 0.5]$ and $\pmb{\lambda}^3 = [0.06, 0.06, 0.06, 0.06, 0.06, 0.06]$, then the unique BE is: $$ BE (\pmb{\psi}^3, \pmb{\lambda}^3) :  \mathbf{p}^* \approx [0.78, 0.78, 0.86, 0.86, 0.78, 0.78] $$
\par\noindent\textit{Case 4:}
When $\pmb{\psi}^4 = [0.5, 0.5, 15, 0.5, 0.5, 0.5]$ and $\pmb{\lambda}^4 = [2, 2, 2, 2, 2, 2]$, then there are three BE:
$$ BE_1 (\pmb{\psi}^4, \pmb{\lambda}^4) : \mathbf{p}^* \approx [0, 0, 0.23, 0, 0, 0] $$
$$ BE_2 (\pmb{\psi}^4, \pmb{\lambda}^4): \mathbf{p}^* \approx [1, 1, 0.94, 1, 1, 1] $$
$$ BE_3 (\pmb{\psi}^4, \pmb{\lambda}^4): \mathbf{p}^* \approx [1, 1, 0.82, 0.04, 0, 0] $$.
\par\noindent\textit{Case 5:}
Also, when $\pmb{\psi}^5 = [0.5, 0.5, 0.5, 0.5, 0.5, 0.5]$ and $\pmb{\lambda}^5 = [2, 2, 0.06, 2, 2, 2]$, then there are three BE:

$$ BE_1 (\pmb{\psi}^5, \pmb{\lambda}^5 ) : \mathbf{p}^* \approx [0, 0, 0.25, 0, 0, 0] $$
$$ BE_2 (\pmb{\psi}^5, \pmb{\lambda}^5 ): \mathbf{p}^* \approx [1, 1, 0.93, 1, 1, 1] $$
$$ BE_3 (\pmb{\psi}^5, \pmb{\lambda}^5 ): \mathbf{p}^* \approx [1, 1, 0.79, 0.01, 0, 0] $$.

To summarise, $\pmb{\psi}^2 \geq \pmb{\psi}^4 \geq \pmb{\psi}^1$, $\pmb{\lambda}^2 =\pmb{\lambda}^4 =\pmb{\lambda}^1$, and $\pmb{\eta}^2 =\pmb{\eta}^4 =\pmb{\eta}^1$, but the number of BE is 1, 3, 2, under $\pmb{\psi}^2$, $\pmb{\psi}^4$, and $\pmb{\psi}^1$, respectively. Also, $\pmb{\lambda}^3 \leq \pmb{\lambda}^5 \leq \pmb{\lambda}^1$, $\pmb{\psi}^3=\pmb{\psi}^5=\pmb{\psi}^1$, and $\pmb{\eta}^3 =\pmb{\eta}^5 =\pmb{\eta}^1$, but the number of BE is 1, 3, 2 under $\pmb{\lambda}^3$, $\pmb{\lambda}^5$, ad $ \pmb{\lambda}^1$, respectively. This shows that the number of BE is not necessarily monotonic in $\pmb{\psi}$ or in $\pmb{\lambda}$.

\end{proof}

This example shows that when $\pmb{\psi}$ and $\pmb{\lambda}$ are between the two extreme cases $\pmb{0}$ and $\pmb{\infty}$, the number of $BE$ can be larger than that under $\pmb{\psi} \to \pmb{0}$ or $\pmb{\lambda} \to \pmb{\infty}$. In particular, when people conduct belief-based learning, the number of $BE$ can be larger than that of Nash equilibria. The key reason is that under intermediate level of decision accuracy and memory, agents' actions can be noisy enough so more possibilities of sustainable action profiles are opened and accurate enough so not all attractions will go toward the risk-dominant option.

To have a better understanding of this, consider again the example in the above proof. The Nash equilibrium does not support diverse actions among players: it is easy to check that every agent in the network requires all its neighbours choosing $C$ in order for the agent in question to choose $C$. However, the actions of agents in the behavioural model are not pure strategies. In particular, agent $2$ in Case \textit{4} and \textit{5} has high $\pmb{\psi}$ and low $\pmb{\lambda}$, respectively, which means that her choice tends to be noisier than others. For example, in the third BE of Case \textit{4}, agent 2 plays action $D$ with a probability of $82\%$ and action $C$ with a probability of $18 \%$, and it is this $18 \%$ probability of playing action $C$ that has a huge implication for agent $3$ and therefore also agents $4$ and $5$: due to agent 2's noises in choosing $D$, agent $3$ now has on average approximately $2.18$ neighbours playing action $D$ as opposed to $2$ when restricting to pure actions, which makes agent $3$'s optimal response action $C$. Also, since the overall decision accuracy is accurate enough, it is sustainable for agents $3$, $4$, and $5$ to favour the efficient option.

So far the focus is on all possible long-term action profiles without explicitly studying which of them (if there are multiple one) the dynamical network game will approach given people's initial predispositions, which will be studied next.

\section{Influence Analysis}
\label{sec:influ}
This section considers the influences of agents with different behavioural features and network positions. Some questions are: whose initial predispositions are important to the actions of the whole population? How does the influence of an agent depend on her and others' behavioural features and the network structure? 

An environment convenient for studying these influence questions is where the two choices are symmetric as in Table \ref{sym matrix} since there is no prior difference between the two options and what choice(s) the population plays in the long run may be attributed to agents' initial preferences.

More concretely, suppose in the network, some group of people $N_C(0)$ initially favour $C$ while others $N_D(0) =N \setminus N_C(0)$ initially favour $D$ and the average initial attraction difference is zero, meaning that people on average are indifferent between these two choices. Then which choice, if any, the network will coordinate on in the long run depends on which of the two groups is on average more influential. Note that the analysis for the general case of asymmetric choices can be accomplished similarly by re-centering agents' average initial attraction differences $\mathbf{q}(0)$, so restricting to the symmetric action case is without loss of generality.\footnote{To see this, note that if the two choices are asymmetric, then there must exist one fixed point $\mathbf{q}^*$ of the dynamical system (\ref{big q}) such that agents favour the risk-dominant option under $\mathbf{q}^*$ and that $\mathbf{q}^*$ becomes unstable when $\pmb{\psi}$ is small and $\pmb{\lambda}$ is large. This $\mathbf{q}^*$ would play the same role as the origin $\mathbf{0}$ does in the case of symmetric choices described in this section.}

\begin{table}[h!]
\centering
\caption{Coordination game with symmetric choices}
%\label{table1}
 \label{sym matrix}
 \begin{tabular}{|c |c | c |} 
 \hline
  & $C$ & $D$ \\ [0.5ex]
 \hline
 $C$ & $h, h$ & $l, l$ \\ [0.5ex]  \hline
 $D$ & $l, l$ & $h, h$ \\
  [1ex] 
 \hline
 \end{tabular}
\end{table}

That is, $\Pi_{00}=\Pi_{11}=h$ and $\Pi_{01}=\Pi_{10}=l$: coordination on either choice produces payoff $h$, while mis-coordination results in payoff $l$. It is assumed that $h>l$.
\\
\par\noindent\textit{Analytical Results}---The dynamical system \ref{big q} is now $\forall i$,
\begin{align}
\begin{split}
\label{dyn-sym}
 \dot{q}_{i}= &-\psi_i q_{i} + (p_{i}+\eta_i(1-p_{i})) \sum_{j \in N_i} \left(p_{j} h + (1-p_{j})l \right) \\ &
  - (1-p_{i}+\eta_i p_{i}) \sum_{j \in N_i}   \left(p_{j} l + (1-p_{j})h \right)
  \end{split}
\end{align}

It is clear that $\mathbf{q}^*=\mathbf{0}$ is a fixed point of the dynamical system \ref{dyn-sym}, which corresponds to the strategy profile $\mathbf{p}^*=\mathbf{\frac{1}{2}}$: all agents play the two choices with equal probability. Whether it is asymptotically stable depends on the model parameters through the Jacobian matrix $J(\mathbf{q}^*)$ evaluated at $\mathbf{q}^*=\mathbf{0}$, which has the form 
\begin{align}
\label{jaco}
    \begin{split}
      &  J_{ii}=-\psi_i + \frac{(1-\eta_i) d_i (h+l) \lambda_i}{4}, \ \ \forall i \\
      &  J_{ij}= \frac{G_{ij}(1+\eta_i) (h-l) \lambda_j}{4}\ , \ \ \forall i, j, j \neq i
    \end{split}
\end{align}
Standard theory from dynamical systems tells us that

\begin{claim}
The state $\mathbf{q}^*=\mathbf{0}$ is asymptotically stable (or equivalently $\mathbf{p}^*=\mathbf{\frac{1}{2}}$ is a BE) if all the eigenvalues of the Jacobian matrix \ref{jaco} have negative real parts.
\end{claim}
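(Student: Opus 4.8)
The plan is to recognise the claim as a direct instance of Lyapunov's indirect method (the linearisation theorem for equilibria of smooth autonomous ODEs), so the work reduces to supplying its two hypotheses: that $\mathbf{q}^*=\mathbf{0}$ really is a fixed point of (\ref{dyn-sym}), and that the Jacobian of $F$ there is the matrix displayed in (\ref{jaco}).

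First I would verify $F(\mathbf{0})=\mathbf{0}$. At $\mathbf{q}=\mathbf{0}$ every $p_i=\tfrac12$, so each weight $p_i+\eta_i(1-p_i)$ and $1-p_i+\eta_i p_i$ equals $\tfrac{1+\eta_i}{2}$, and each inner sum $\sum_{j\in N_i}(p_j h+(1-p_j)l)$ and $\sum_{j\in N_i}(p_j l+(1-p_j)h)$ equals $d_i\tfrac{h+l}{2}$; hence the two symmetric terms in (\ref{dyn-sym}) cancel and $\dot q_i=0$ for every $i$ (as already noted in the text).

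Next I would differentiate $F_i$, using $\frac{dp_i}{dq_i}=\lambda_i p_i(1-p_i)$, which equals $\lambda_i/4$ at $p_i=\tfrac12$. For the diagonal entry, only the factors $p_i+\eta_i(1-p_i)$ and $1-p_i+\eta_i p_i$ depend on $q_i$, with $q_i$-derivatives $(1-\eta_i)\frac{dp_i}{dq_i}$ and $-(1-\eta_i)\frac{dp_i}{dq_i}$; multiplying by the inner sums (each $d_i\tfrac{h+l}{2}$ at $\mathbf{0}$), adding them, and including $-\psi_i$ yields $J_{ii}=-\psi_i+\frac{(1-\eta_i)d_i(h+l)\lambda_i}{4}$. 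For $j\neq i$, $q_j$ enters $F_i$ only through $p_j$ inside the two inner sums, and only when $G_{ij}=1$; the coefficient of $\frac{dp_j}{dq_j}$ is $G_{ij}(h-l)\big[(p_i+\eta_i(1-p_i))+(1-p_i+\eta_i p_i)\big]=G_{ij}(h-l)(1+\eta_i)$, which at $\mathbf{0}$ gives $J_{ij}=\frac{G_{ij}(1+\eta_i)(h-l)\lambda_j}{4}$. This reproduces (\ref{jaco}).

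Finally, since $F$ is $C^{\infty}$, the standard linearisation theorem (e.g. \cite{strogatz2018nonlinear}) applies: if every eigenvalue of $J(\mathbf{0})$ has strictly negative real part, then $\mathbf{q}^*=\mathbf{0}$ is locally asymptotically stable, i.e. it satisfies both the local-attractivity condition and the Lyapunov-stability condition in the definition of a stable fixed point given in Section \ref{sec: set}. Because $p_i=1/(1+e^{-\lambda_i q_i})$ is a homeomorphism and $\mathbf{q}^*=\mathbf{0}$ corresponds to $\mathbf{p}^*=\tfrac12$, this says exactly that $\mathbf{p}^*=\tfrac12$ is a BE in the sense of Definition \ref{def:be}. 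There is no genuine obstacle here; the only point to watch is that the paper's two-part definition of ``stable fixed point'' is precisely asymptotic stability, so the theorem delivers the conclusion verbatim, and since the claim is stated only as a sufficient condition the borderline case of a zero-real-part eigenvalue need not be treated.
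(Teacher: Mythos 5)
Your proposal is correct and takes essentially the same route as the paper, which offers no proof beyond displaying the Jacobian (\ref{jaco}) and invoking ``standard theory from dynamical systems'' (Lyapunov's indirect method); you simply make explicit the two hypotheses the paper leaves implicit, namely that $\mathbf{q}^*=\mathbf{0}$ is a fixed point of (\ref{dyn-sym}) and that the linearisation there is exactly (\ref{jaco}), and both of your computations check out. Nothing further is needed.
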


\begin{example}\textbf{2:}
Suppose all agents conduct belief-based learning (i.e., $\pmb{\eta}= \mathbf{1}$), and $\psi_i =\psi$ and $\lambda_i =\lambda$, $\forall i$, then the Jacobian matrix is equal to $-\psi I + \frac{\lambda (h-l)}{2} G$. Thus, the state $\mathbf{q}^*=\mathbf{0}$ is asymptotically stable if the largest eigenvalue of the graph $G$ is smaller than $\frac{2\psi}{ \lambda (h-l)}$. This suggests that (under belief-based learning) the state in which all agents are indifferent between the two choices tends to be stable when agents are forgetful and inaccurate, when the payoff difference between coordination and mis-coordination is small, and when the network is ``sparse". 
\end{example}

To analyse the influence question, I focus on the case in which the state $\mathbf{q}^*=\mathbf{0}$ is not stable (otherwise starting with a state near $\mathbf{q}^*=\mathbf{0}$, the action pattern would converge to $\mathbf{q}^*=\mathbf{0}$, so there would be no difference in influences across agents in the sense of influencing the long-term action profile). 

The analysis relies on the linear approximation near the state $\mathbf{q}^*=\mathbf{0}$:
\begin{align}
\label{approx_linear}
    \begin{split}
        \dot{\mathbf{q}} & = F(\mathbf{q})  \approx F(\mathbf{0}) + J(\mathbf{0}) \mathbf{q} \\
        & =J(\mathbf{0}) \mathbf{q}  = B K B^{-1} \mathbf{q}
    \end{split}
\end{align}
where $BKB^{-1}$ is the eigendecomposition of the Jacobian matrix $J(\pmb{0})$. Then the nonlinear dynamical system is approximated by a linear dynamical system near the state $\mathbf{q}^*=\pmb{0}$. Suppose the eigenvalues of the matrix $J(\mathbf{0})$ are $\kappa_1$, $\kappa_2$,..., $\kappa_n$ with $\kappa_1$ being the largest eigenvalue,\footnote{It is assumed in the paper that the $n$ eigenvalues are distinct, a property which holds generically.} and its associated right eigenvectors are $v_1$, $v_2$, ..., $v_n$, and associated left eigenvectors are $u_1$, $u_2$, ..., $u_n$. Then it is well known that the linearised dynamic system (denoted by $\hat{\mathbf{q}}(t)$) has the following solution:
\begin{align}
\label{linear_sol}
    \begin{split}
       \hat{\mathbf{q}}(t)= \sum_{r=1}^{n} u_r^T \mathbf{q}(0) e^{\kappa_r t} v_r
    \end{split}
\end{align}

That is, the solution to the linear dynamical system is a linear combination of eigenvectors of $J(\pmb{0})$ with the weight on each eigendirection being $u_r^T \mathbf{q}(0) e^{\kappa_r t}$. The direction that matters most in the long run is the one having the largest eigenvalue. In principle, when $t$ goes to infinity, the eigendirection associated with the largest eigenvalue dominates the trajectory $\hat{\mathbf{q}}(t)$, but since the dynamical system is nonlinear and the linear system is just a local approximation around $\mathbf{q}^*=\pmb{0}$, the argument in terms of the direction of the action pattern is valid only when $\mathbf{q}$ is sufficiently close to the origin.

More formally, the next Proposition shows that when agents' initial attraction differences are sufficiently close to $\mathbf{q}^*=\pmb{0}$, then the long-term action pattern can be determined by a weighted average of initial attraction differences $\{q_1(0), q_2(0),...,q_n(0) \}$ with the weights being the first left eigenvector. The condition for the eigenvector approximation to be valid is that agents are nearly indifferent between the two actions at the beginning. This might happen when agents have limited knowledge about the payoffs of the two choices, have little understanding of the game, or have little information about which actions others will be choosing.

\begin{proposition}
\label{pro: influence}
Let the payoff structure be in Table \ref{sym matrix}. Suppose that the state $\mathbf{q^*}=\pmb{0}$ is unstable and the right eigenvector associated with the largest eigenvalue have all elements positive. Then for any given $\delta>0$, there exists an $\epsilon >0$ such that for any $\mathbf{q}(0)$ with $ || \mathbf{q}(0) || \in (0, \epsilon) $, it must be that $\frac{u_1^T \mathbf{q}(0)}{||\mathbf{q}(0) ||}  >\delta$ implies that the long-run action pattern has $\mathbf{q^*}>\mathbf{0}$ (all agents favour action $D$), while $\frac{u_1^T \mathbf{q}(0)}{||\mathbf{q}(0) ||} < -\delta$ implies that the long-run action pattern has $\mathbf{q^*}<\mathbf{0}$ (all agents favour action $C$).
\end{proposition}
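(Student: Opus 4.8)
The plan is to run a linearization-then-monotonicity argument: use the Perron--Frobenius structure of the Jacobian $J(\mathbf{0})$ together with the linear approximation (\ref{approx_linear})--(\ref{linear_sol}) to show that the hypothesis $u_1^T\mathbf{q}(0)>\delta\|\mathbf{q}(0)\|$ forces the nonlinear trajectory to enter, at some \emph{fixed} time, a region of the form $\{\mathbf{q}\ge \epsilon_0 v_1\}$, and then exploit the cooperativity of the flow established in the proof of Proposition \ref{pro1} to trap it there for all later times and let Proposition \ref{pro1} deliver convergence.

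\textbf{Spectral and structural preliminaries.} Under the payoff matrix of Table \ref{sym matrix} the off-diagonal entries of $J(\mathbf{0})$ are $J_{ij}=\tfrac14 G_{ij}(1+\eta_i)(h-l)\lambda_j\ge 0$ and share the sparsity pattern of $G$, so (Assumption \ref{ass:network}) $J(\mathbf{0})$ is an irreducible Metzler matrix; Perron--Frobenius then makes its largest eigenvalue $\kappa_1$ real and simple, strictly dominant in real part, with right eigenvector $v_1>\mathbf{0}$ (as assumed) and left eigenvector $u_1>\mathbf{0}$, and instability of $\mathbf{0}$ gives $\kappa_1>0$. Moreover $F$ is smooth with $F(\mathbf{0})=\mathbf{0}$ and $DF(\mathbf{0})=J(\mathbf{0})$, so $F(\mathbf{q})=J(\mathbf{0})\mathbf{q}+g(\mathbf{q})$ with $\|g(\mathbf{q})\|=O(\|\mathbf{q}\|^2)$ on a fixed ball about $\mathbf{0}$; and substituting in (\ref{dyn-sym}), using that $\mathbf{q}\mapsto-\mathbf{q}$ sends each $p_i$ to $1-p_i$, gives $F(-\mathbf{q})=-F(\mathbf{q})$. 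Hence it suffices to treat the case $u_1^T\mathbf{q}(0)>\delta\|\mathbf{q}(0)\|$, the case $u_1^T\mathbf{q}(0)<-\delta\|\mathbf{q}(0)\|$ following by applying the conclusion to $-\mathbf{q}(0)$.

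\textbf{From a directional hypothesis to an order inequality.} In the biorthogonal expansion (\ref{linear_sol}), $\hat{\mathbf{q}}(t)=(u_1^T\mathbf{q}(0))e^{\kappa_1 t}v_1+\sum_{r\ge 2}(u_r^T\mathbf{q}(0))e^{\kappa_r t}v_r$, the tail having Euclidean norm at most $C_2\|\mathbf{q}(0)\|e^{\mu t}$ with $\mu:=\max_{r\ge2}\mathrm{Re}(\kappa_r)<\kappa_1$. Using $u_1^T\mathbf{q}(0)>\delta\|\mathbf{q}(0)\|$ and $v_1\ge m\mathbf{1}$ with $m=\min_i(v_1)_i>0$, one obtains the componentwise bound $\hat{\mathbf{q}}(t)\ge \|\mathbf{q}(0)\|\big(\delta m e^{\kappa_1 t}-C_2 e^{\mu t}\big)\mathbf{1}$; since $\kappa_1>\mu$, there exist a time $T=T(\delta)$ — depending on $\delta$ and the fixed system data but \emph{not} on $\|\mathbf{q}(0)\|$ — and a constant $c_0>0$ with $\hat{\mathbf{q}}(T)\ge c_0\|\mathbf{q}(0)\|\mathbf{1}$. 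On $[0,T]$, continuous dependence on initial data keeps the nonlinear trajectory inside the ball where the quadratic remainder bound holds, provided $\|\mathbf{q}(0)\|$ is small enough, so $\|\mathbf{q}(T)-\hat{\mathbf{q}}(T)\|=O(\|\mathbf{q}(0)\|^2)$, whence $\mathbf{q}(T)\ge\tfrac{c_0}{2}\|\mathbf{q}(0)\|\mathbf{1}\ge\epsilon_0 v_1$ with $\epsilon_0:=\tfrac{c_0\|\mathbf{q}(0)\|}{2\max_i(v_1)_i}\to 0$ as $\|\mathbf{q}(0)\|\to 0$.

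\textbf{Monotone trapping and conclusion; the main difficulty.} Shrinking $\|\mathbf{q}(0)\|$ (hence $\epsilon_0$) further so that $F(\epsilon_0 v_1)=\epsilon_0\kappa_1 v_1+g(\epsilon_0 v_1)>\mathbf{0}$, cooperativity (inequality (\ref{coop}) in the proof of Proposition \ref{pro1}) makes $s\mapsto\phi_s(\epsilon_0 v_1)$ nondecreasing, and being bounded (Proposition \ref{pro1}) it increases to a fixed point $\mathbf{q}^+\ge\epsilon_0 v_1>\mathbf{0}$; order-preservation of the flow and $\mathbf{q}(T)\ge\epsilon_0 v_1$ then give $\mathbf{q}(T+s)=\phi_s(\mathbf{q}(T))\ge\phi_s(\epsilon_0 v_1)\ge\epsilon_0 v_1>\mathbf{0}$ for all $s\ge 0$, so every agent strictly favours $D$ from time $T$ on, and by Proposition \ref{pro1} the trajectory converges to a fixed point $\mathbf{q}^*\ge\epsilon_0 v_1>\mathbf{0}$; the reflected case follows from $F(-\mathbf{q})=-F(\mathbf{q})$. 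The main obstacle is precisely the third step: isolating a single time $T$ that is uniform in the magnitude $\|\mathbf{q}(0)\|$ (while allowed to depend on $\delta$) at which the linear flow has been pushed a definite amount into the cone around $v_1$, and verifying that the nonlinear correction accumulated over $[0,T]$ is of strictly smaller order than that amount — this is what converts the purely directional hypothesis on $\mathbf{q}(0)$ into the order inequality $\mathbf{q}(T)\ge\epsilon_0 v_1$ that launches the monotone argument, and it forces $\epsilon$ to be chosen only after $T(\delta)$ and requires care that the trajectory not leave the neighbourhood of $\mathbf{0}$ where the remainder estimate is valid during $[0,T]$.
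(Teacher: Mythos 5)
Your proposal is correct and its first two steps are essentially the paper's own argument: normalise $\mathbf{q}(0)$, use the eigen-expansion of the linearised flow to find a time $T(\delta)$, uniform in $\|\mathbf{q}(0)\|$, at which $\hat{\mathbf{q}}(T)$ is componentwise positive by an amount of order $\|\mathbf{q}(0)\|$, and then shrink $\|\mathbf{q}(0)\|$ so that the $O(\|\mathbf{q}(0)\|^2)$ nonlinear correction over $[0,T]$ cannot destroy that sign. You correctly identify this uniformity-in-magnitude issue as the crux; the paper handles it the same way (its $\overline{t}$ is ``independent of $\epsilon$''), just more tersely. Where you genuinely go beyond the paper is the final step. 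The paper stops at $\mathbf{q}(2\overline{t})>\mathbf{0}$ and simply asserts that the system therefore converges to an all-$D$ equilibrium ``as the origin is unstable''; strictly speaking, positivity at a single time does not by itself prevent the orbit from re-entering the region where some $q_i<0$. Your monotone-trapping argument --- choose $\epsilon_0$ small enough that $F(\epsilon_0 v_1)=\epsilon_0\kappa_1 v_1+g(\epsilon_0 v_1)>\mathbf{0}$, note that the forward orbit of $\epsilon_0 v_1$ is then nondecreasing under the cooperative (order-preserving) flow from inequality (\ref{coop}), and sandwich $\mathbf{q}(T+s)\ge\phi_s(\epsilon_0 v_1)\ge\epsilon_0 v_1$ --- closes this gap cleanly and combines with Proposition \ref{pro1} to give convergence to a fixed point $\mathbf{q}^*\ge\epsilon_0 v_1>\mathbf{0}$. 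The Perron--Frobenius observation that $J(\mathbf{0})$ is an irreducible Metzler matrix (so $\kappa_1$ is real, simple, and strictly dominant in real part, with $u_1>\mathbf{0}$) and the oddness $F(-\mathbf{q})=-F(\mathbf{q})$, which reduces the $C$-case to the $D$-case, are also tidy additions the paper leaves implicit. In short: same route, but your version supplies the justification the paper's last sentence is missing.
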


The proof is in Appendix.

The idea of Proposition \ref{pro: influence} is that under the assumption that the first right eigenvector has all elements positive, $u_1^T \mathbf{q}(0) $ determines whether the linearised dynamical system will go toward $\hat{\mathbf{q}}>\mathbf{0}$ or $\hat{\mathbf{q}}<\mathbf{0}$ in the long run. Specifically, if $u_1^T \mathbf{q}(0) > 0$ ($<0$), then the linearised dynamical system will have $\hat{\mathbf{q}}(t)>\mathbf{0}$ ($<\mathbf{0}$) for all $t > \overline{t}$ for some $\overline{t}$. Note that the linear approximation works well only when the attraction vector is close to the origin, while the linearised dynamical system $\hat{\mathbf{q}}(t)$ itself has all elements positive or negative only after some time $\overline{t}$. Since the magnitude of $\mathbf{q}(0)$ does not impact the value of the threshold $\overline{t}$, then if $\mathbf{q}(0)$ is sufficiently close to the origin, $\hat{\mathbf{q}}(t)$ will still be sufficiently close to the origin at time $\overline{t}$. This indicates that $\mathbf{q}(t)$ will also have either $\mathbf{q}(t)>\mathbf{0}$ or $< \mathbf{0}$ after time $\overline{t}$, which represent all agents favouring action $D$ and $C$ in the long run, respectively.

For convenience, denote the influence vector by $\pmb{\xi} := u_1$ ($u_1$ is the first left eigenvector of the Jacobian matrix $J(\mathbf{0})$). Loosely speaking, Proposition \ref{pro: influence} says that as long as people are nearly indifferent between the two choices initially, then when $\sum_i^n \xi_i q_i(0) $ is strictly positive (negative), the population will coordinate on action $D$ ($C$). In this sense, $\pmb{\xi}$ summarises the influence of each agent. The initial predispositions of agents who have a higher (lower) $\xi_i$ have a higher (lower) weight on determining which action the population will play in the long term.

To study how the influences depend on the behavioural parameters and network structures, note that the influence of agent $i$ satisfies:
\begin{align}
\label{lefteigen}
    \begin{split}
        \xi_i = \kappa_1 \sum_{j=1}^{n}  \xi_j J_{ji}
    \end{split}
\end{align}

where recall that
\begin{align}
\label{jaco2}
    \begin{split}
      &  J_{ii}=-\psi_i + \frac{(1-\eta_i) d_i (h+l) \lambda_i}{4}, \ \ \forall i \\
      &  J_{ji}= \frac{G_{ji}(1+\eta_j) (h-l) \lambda_i}{4}\ , \ \ \forall i, j, j \neq i
    \end{split}
\end{align}

Thus, the following simple comparative statics can be seen directly from (\ref{lefteigen}) and (\ref{jaco2}).

\begin{claim}
Let $\pmb{\xi}$ be the left eigenvector associated with the largest eigenvalue of $J(\pmb{0})$. $\pmb{\xi}$ represents agents' influences in the sense of Proposition \ref{pro: influence}, and \\
(1) $\xi_i$ is decreasing with $\psi_i$; \\
(2) if $(h+l)>0$ or if $\eta_i=1$, then $\xi_i$ is increasing with $\lambda_i$.
\end{claim}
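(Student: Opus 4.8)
The plan is to read the two comparative statics as statements about how the (normalized) left Perron eigenvector of the Jacobian $J(\pmb 0)$ responds to a perturbation supported on a single column. First I would record the structure of $J(\pmb 0)$ from (\ref{jaco2}): it is an essentially nonnegative (Metzler) matrix, since its off-diagonal entries $J_{ij}=G_{ij}(1+\eta_i)(h-l)\lambda_j/4$ are nonnegative because $h>l$, and the directed graph of its nonzero off-diagonal entries is exactly $G$, which is strongly connected by Assumption~\ref{ass:network}; hence $J(\pmb 0)$ is irreducible. By Perron--Frobenius theory for irreducible Metzler matrices, the eigenvalue $\kappa_1$ of largest real part is real and simple, and the associated left eigenvector $\pmb\xi=u_1$ and right eigenvector $v_1$ may be taken strictly positive (this is precisely the positivity hypothesis used in Proposition~\ref{pro: influence}); in the regime of interest $\pmb 0$ is unstable, which forces $\kappa_1>0$. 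Because only $\mathrm{sign}(\pmb\xi^{T}\mathbf q(0))$ matters in Proposition~\ref{pro: influence}, $\pmb\xi$ is defined only up to a positive scalar, so I read the claim under a fixed normalization, for instance $\sum_i\xi_i=1$.

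Next I would locate where the parameters enter. From (\ref{jaco2}), $\psi_i$ appears in $J(\pmb 0)$ only in the entry $J_{ii}$, with $\partial J_{ii}/\partial\psi_i=-1$, so raising $\psi_i$ is a negative perturbation supported on column $i$. The parameter $\lambda_i$ also appears only in column $i$: for $j\neq i$, $\partial J_{ji}/\partial\lambda_i=G_{ji}(1+\eta_j)(h-l)/4\geq 0$ (always, since $h>l$), while $\partial J_{ii}/\partial\lambda_i=(1-\eta_i)d_i(h+l)/4$, which is $\geq 0$ exactly under the stated hypothesis $h+l>0$ or $\eta_i=1$. Thus, under that hypothesis, raising $\lambda_i$ is a nonnegative, nonzero perturbation supported on column $i$ (nonzero because node $i$ has a neighbour, so some off-diagonal column-$i$ entry strictly increases). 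Both parts of the claim therefore reduce to one monotonicity fact: raising (lowering) all entries in one column $i$ of an irreducible Metzler matrix weakly raises (lowers) the $i$-th coordinate of its sum-normalized left Perron eigenvector.

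To prove that fact I would pass to the nonnegative matrix $A:=J(\pmb 0)+cI$ for $c$ large (same eigenvectors, same ordering of real parts) and write the perturbed matrix as $A'=A+\pmb\delta e_i^{T}$ with $\pmb\delta\geq\pmb 0$, $\pmb\delta\neq\pmb 0$. Since $A'\geq A$, $A'\neq A$, and both are irreducible, $\rho':=\rho(A')>\rho(A)=:\rho$, so $\rho'I-A$ is a nonsingular M-matrix and $R:=(\rho'I-A)^{-1}>0$ entrywise. From $\pmb\xi'^{T}A'=\rho'\pmb\xi'^{T}$ one gets $\pmb\xi'^{T}(\rho'I-A)=(\pmb\xi'^{T}\pmb\delta)\,e_i^{T}$, so $\pmb\xi'$ is a positive multiple of the $i$-th row of $R$; from $\pmb\xi^{T}A=\rho\pmb\xi^{T}$ one gets $\pmb\xi^{T}=(\rho'-\rho)\pmb\xi^{T}R$, so $\pmb\xi$ is a strictly positive combination of all the rows of $R$. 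Interpreting $R_{mk}$ as a discounted total weight of walks $m\to k$ in $A$, a decomposition at the first visit to $i$ gives $R_{mi}=F_{mi}R_{ii}$ and $\sum_k R_{mk}\geq F_{mi}\sum_k R_{ik}$, where $F_{mi}>0$ by irreducibility; hence $R_{mi}/\sum_k R_{mk}\leq R_{ii}/\sum_k R_{ik}$ for every $m$. Taking the $\pmb\xi$-weighted average of the left side and renormalizing both eigenvectors to sum to one then yields $\xi_i\leq\xi'_i$, which is the monotonicity fact. Applying it gives part (2) directly and part (1) in the reverse direction, since raising $\psi_i$ lowers column $i$.

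The main obstacle --- and what makes this more than ``reading off the sign from (\ref{lefteigen})'' --- is the general-equilibrium feedback: perturbing $\psi_i$ or $\lambda_i$ moves $\kappa_1$ and every coordinate $\xi_j$ at once, so the right-hand side of $\kappa_1\xi_i=\sum_j\xi_j J_{ji}$ cannot be held fixed; the monotonicity fact above, together with the choice of a scale-normalization, is exactly what tames this feedback. A secondary point worth flagging is that the hypothesis in part (2) is essential: if $h+l<0$ and $\eta_i<1$, raising $\lambda_i$ lowers $J_{ii}$ while raising the off-diagonal column-$i$ entries, so the column-$i$ perturbation is no longer sign-definite, the monotonicity fact does not apply, and $\xi_i$ may move either way.
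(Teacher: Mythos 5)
Your proof is correct, and it takes a genuinely different route from the paper: the paper offers no formal argument here, simply asserting that the two comparative statics ``can be seen directly'' from the eigenvector identity $\kappa_1\xi_i=\sum_j\xi_jJ_{ji}$ together with the column structure of $J(\pmb 0)$ in (\ref{jaco2}) --- i.e.\ it reads the signs off the fixed-point equation holding $\kappa_1$ and $\xi_{-i}$ fixed. You correctly identify that this heuristic ignores the feedback through $\kappa_1$ and the other coordinates, and you close the gap with a genuine perturbation argument: shift to the nonnegative irreducible matrix $A=J(\pmb 0)+cI$, observe that $\psi_i$ and $\lambda_i$ enter only through column $i$ (with the stated hypothesis on $h+l$ or $\eta_i$ guaranteeing sign-definiteness of the column perturbation), and then prove the key lemma that a nonnegative rank-one column perturbation raises the $i$-th coordinate of the sum-normalized left Perron vector, via the resolvent $R=(\rho'I-A)^{-1}$, the identities $\pmb\xi'^{T}\propto e_i^{T}R$ and $\pmb\xi^{T}=(\rho'-\rho)\pmb\xi^{T}R$, and a first-passage decomposition giving $R_{mi}/\sum_kR_{mk}\le R_{ii}/\sum_kR_{ik}$. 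Each step checks out (including the mediant inequality at the end), and your approach buys three things the paper's reading-off does not: it makes the claim well-posed by forcing an explicit normalization of $\pmb\xi$, it delivers an actual proof robust to the simultaneous movement of $\kappa_1$ and all $\xi_j$, and it shows the Perron positivity assumed in Proposition \ref{pro: influence} is automatic from irreducibility of $G$. The cost is length; the paper's one-line observation conveys the economic intuition but is not a proof. Your closing remark that the hypothesis in part (2) is exactly what keeps the column perturbation sign-definite is also the right diagnosis of why the condition appears in the statement.
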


In addition to these, as what eigenvector centrality commonly reflects in other contexts, the agents who connect to highly influential people tend to have high influence themselves. The intuition in this paper's setting is that an agent's initial attraction can impact her neighbours' attractions and if her neighbours are highly influential, then the impacts can propagate to the agent's neighbours' neighbours. Also, loosely speaking, if the neighbours of an agent have high $\eta_j$ (i.e., not ignoring unselected choices), then, ceteris paribus, the agent in question will also tend to have a high influence.

Note that the left eigenvector as a measure of importance also appears in the DeGroot learning model (e.g., \cite{degroot1974reaching} , \cite{golub2010naive}, \cite{golub2017learning}, etc.), where the importance of agents is summarised by the eigenvector centrality of the belief updating matrix. Here the Jacobian matrix evaluated at $\mathbf{q^*}=\pmb{0}$ plays a similar role as the updating matrix does in the learning model in that both describe how the states (attractions or beliefs) of agents influence each other. There are, however, three main differences: (1) the impact of one's state on others' now depends on the behavioural features, namely $\pmb{\lambda}$ and $\pmb{\eta}$, which interact with the information about the network and payoff structures; (2) the diagonal element of the updating matrix in the learning model reflects the weight an agent puts on herself, while here it reflects attraction depreciation and reinforcement learning effects; (3) the DeGroot learning model is a linear system while the model in this paper is nonlinear so the prediction made by the left eigenvector only works when people are nearly indifferent between the two choices at the beginning.

The following example shows both how the aforementioned influence vector works and how the influences of agents may depend on their behavioural features and network positions.
\\
\par\begin{example}\textbf{3:}
Consider a simple 3-node star network shown in Figure \ref{inf-net}, where agent $0$ is the center and has two degrees, while agents $1$ and $2$ have degree one. Suppose the payoff function is that coordination on either option results in payoff $2$ while mis-coordination produces payoff $-1$. That is, $h=2$ and $l=-1$. The behavioural parameters are as follows: $\pmb{\psi}=[1, 1, \frac{1}{2}]$, $\pmb{\lambda}=[\frac{1}{2}, \frac{1}{2}, 1]$, and $\pmb{\eta}=[\frac{1}{2},\frac{1}{2},\frac{1}{2}]$.

\begin{figure}[H]
  \centering
  \caption{network structure}
  \label{inf-net}
    \includegraphics[width=0.5\textwidth]{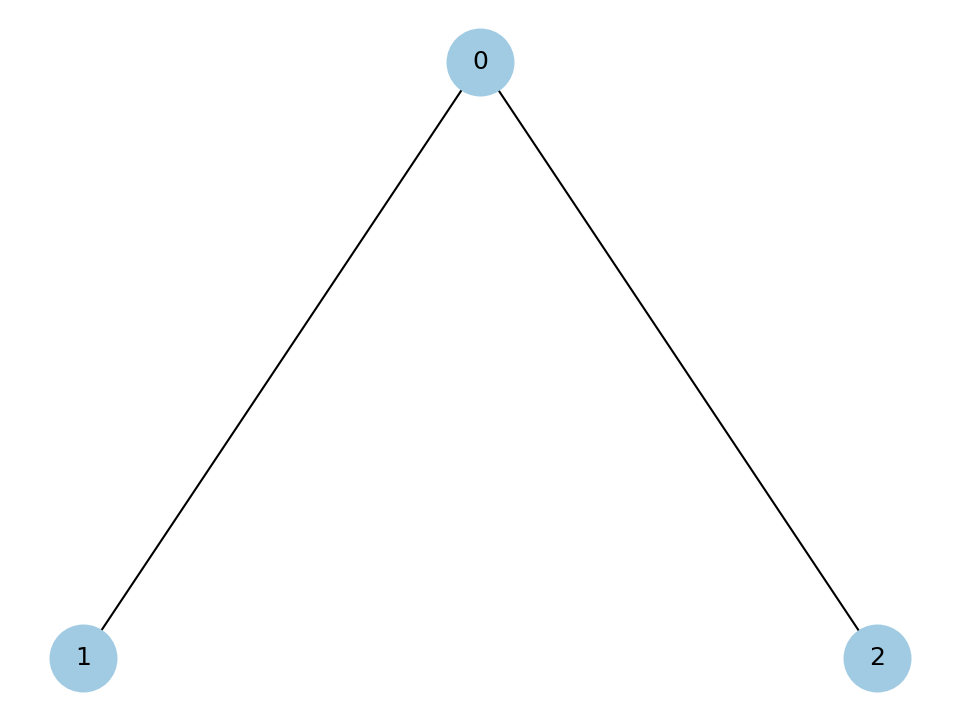}\label{inf-net}
\end{figure}

The solution to the linearised dynamical system is:
\begin{align}
\label{linear_sol}
    \begin{split}
        \hat{\mathbf{q}}(t) = & [0.32, 0.15, 0.53] \cdot \mathbf{q}(0)  e^{0.31 t}  [1.21, 0.55, 1.00]^T\\
         & +  u_2^T \mathbf{q}(0) e^{-0.76 t}  v_2 +  u_3^T \mathbf{q}(0)  e^{-1.74 t} v_3
    \end{split}
\end{align}
A principal component approximation is thus given by
\begin{align}
\label{linear_sol}
    \begin{split}
       \hat{\mathbf{q}}(t) \approx \left( 0.32 q_1(0) + 0.15 q_2(0) + 0.53 q_3(0)  \right)   e^{0.31 t}  [1.21, 0.55, 1.00]^T 
    \end{split}
\end{align}

As the first right eigenvector has all elements positive, the sign of $0.32 q_1(0) + 0.15 q_2(0) + 0.53 q_3(0) $ determines the sign of $\hat{\mathbf{q}}(t)$ when $t$ is large. That is, if the weighted summation of initial attraction differences, $0.32 q_1(0) + 0.15 q_2(0) + 0.53 q_3(0) $, is positive (negative), then $\hat{\mathbf{q}}(t)$ will have all elements positive (negative) when $t$ is large, which also implies that the long-run action pattern is that the population coordinating on action $D$ ($C$) if $\mathbf{q}(0)$ is sufficiently close to $\mathbf{0}$ in the sense of Proposition \ref{pro: influence}.

The influence vector is thus $[0.32, 0.15, 0.53]$. Agent $0$ has a higher influence ($0.32$) than agent $1$ does ($0.15$) despite their behavioural parameters being identical. This is because agent $0$ has degree $2$ while agent $1$ has degree $1$. Also, agent $2$ has the highest influence ($0.53$) as it has the lowest depreciation rate and highest decision accuracy among the three. 

To give some impression of the approximation, suppose that the initial attraction differences of the three agents are $[0.1, 0.1, -0.18]$. Note that the simple summation of the three initial attractions are positive, but the weighted summation is negative as $0.32 q_1(0) + 0.15 q_2(0) + 0.53 q_3(0) \approx -0.049$. Thus, the importance vector predicts that all attractions will become negative, which means that they will favour action $C$ in the long run.

\begin{figure}[H]
  \centering
  \caption{Approximations}
    \includegraphics[width=0.8\textwidth]{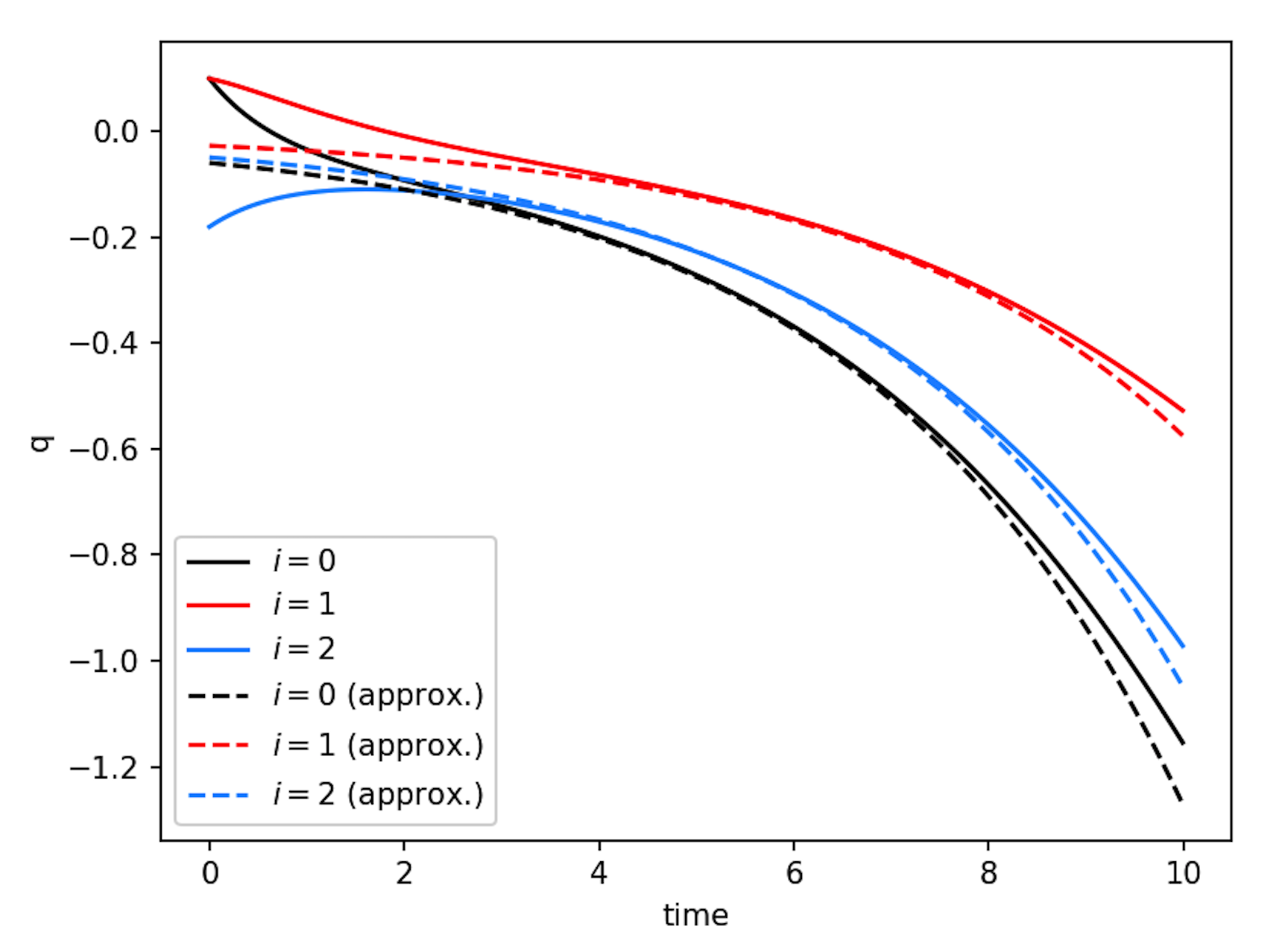}\label{exp_app}
    \caption*{\small Notes: The solid curves represent the dynamics of $q_i(t)$ for the three agents ($i=0, 1, 2$) for $t \in [0, 10]$. The dashed curves represent the corresponding approximations of $q_i(t)$ for each agent, where the approximations are the first principal component of the linearised dynamical system (i.e., $u_1^T \mathbf{q}(0) e^{\kappa_1 t} v_1$).}
\end{figure}

Figure \ref{exp_app} shows that this is indeed the case. Note that when $t$ is small ($t<2$), the approximation is not very good due to the ignorance of other eigencomponents. The approximation is most accurate in intermediate values of $t$. These are the periods when the first principal component is a good approximation of the linear dynamical system $\hat{\mathbf{q}}(t)$ and when the linear dynamical system is a good approximation of the original nonlinear dynamical system $\mathbf{q}(t)$. When $t$ is large, the approximation must diverge from the original system, but this is irrelevant to the analysis of influences as all attraction differences have had the same sign across agents and the population will then remain at the state in which all favour the choice corresponding to that sign.

\end{example}

\par\noindent\textit{Numerical Calculations} --- So far I have shown analytically that the influence of agents can be characterised by the left eigenvector associated with the largest eigenvalue of a Jacobian matrix under some conditions, the most important of which is that people should be nearly indifferent between the two choices initially. This part continues to explore the influence question by numerical calculations with the following two main aims. The first is to show how well the left eigenvector can predict the long-term action profile in practice (when the initial attractions are not arbitrarily close to indifference). Also, the left eigenvector may be hard to interpret in practice, so the second objective is to show how the influences depend on some statistics of the distributions of agents' initial attractions, behavioural parameters, and network positions.

In the numerical calculations, the network is fixed to be the Erdos-Renyi graph (\cite{erdos1960evolution}), $G(n, p)$, with $n=100$ and $p=0.1$. The network will be redrawn if it is not connected.\footnote{As mentioned earlier, if a network is unconnected, then we can analyse each connected component separately, so only connected networks are considered.} To fix idea, the payoff structure is constrained to one of the following two with each accounting for roughly $50 \%$ of the simulations: (1) $\Pi_{00}=\Pi_{11}=2$, $\Pi_{01}=\Pi_{10}=-1$ and (2) $\Pi_{00}=\Pi_{11}=1$, $\Pi_{01}=\Pi_{10}=-2$.\footnote{First, note that the focus of the simulations is influence analysis so only symmetric payoff structures are considered as in the analytical part. Also, the magnitude of the payoff structure needs to be normalised because the effect of changing it can be equivalent to adjusting the magnitude of other parameters, e.g., $\pmb{\lambda}$.} The parameters $\pmb{\psi }$, $\pmb{ \lambda}$, and $\pmb{\eta}$ are all assumed to follow uniform distributions: $\psi_i \sim U(\underline{\psi}, \overline{\psi})$, $\lambda_i \sim U(\underline{\lambda}, \overline{\lambda})$, and $\eta_i \sim U (\underline{\eta}, \overline{\eta})$, where $\underline{\psi}$, 
$\overline{\psi}$, $\underline{\lambda}$, and $\overline{\lambda}$ are themselves drawn uniformly from $[0.1, 10]$ and $\underline{\eta}$ and $\overline{\eta}$ are drawn from $[0,1]$.\footnote{Note that the types of distributions used per se are not important, at least to a first order. The purpose of having those distributions is to explore a relatively large range of parameter values.}

The initial condition $\mathbf{q}(0)$ is assumed to follow a normal distribution with mean zero and standard deviation $\sigma_q$ where $\sigma_q$ is drawn from $[0.01,1]$. Recall that the mean is set to zero so the ex ante probability of coordinating on choice $C$ is equal to that of coordinating on choice $D$ before knowing further information about the distributions of $\mathbf{q}(0)$ and other parameters across agents. In each simulation, all the parameters are drawn independently of each other and across agents.

I conduct $100, 000$ simulations. In $98.77 \%$ of the simulations, the population has either all agents favour $D$ (i.e., $q^*_i>0$, $\forall i$) or all agents favour $C$ (i.e., $q^*_i<0$, $\forall i$). Among them, $50.13 \%$ of the simulations favours action $D$ while the remaining favours action $C$. Thus, the fraction of coordinating on either choice is roughly $50\%$, which is of no surprising due to the symmetry in the data generating process.

The interest lies in how the influences of agents represented by the first left eigenvector $\pmb{\xi}$ described in the analytical part can predict whether the population will coordinate on action $C$ or on action $D$. Recall that the prediction is obtained by the weighted sum of agents' initial attractions, $\pmb{\xi}^T \mathbf{q}(0)$: if this number is positive (negative), then the prediction is that the population will favour action $D$ ($C$).

Restricting to the $99.73 \%$ of the observations where the first eigenvalue is strictly positive,\footnote{Recall that a negative first eigenvalue indicates that the action dynamics will converge to the profile where all agents are indifferent between choices $C$ and $D$, which is not of interest in the influence analysis. Also, in all the simulations the first right eigenvector has all elements positive.} the prediction generated by the aforementioned eigenvector importance has prediction accuracy $85.20 \%$ and $85.15 \%$ for predicting coordination on choices $D$ and $C$, respectively. That is, if the weighted sum $\pmb{\xi}^T \mathbf{q}(0)$ is positive (negative), then in $85.20 \%$ ($85.15 \%$) of the time, the population will indeed favour action $D$ ($C$) in the long run as predicted. Note that these numbers are significantly larger than $50 \%$, the accuracy would have been obtained from random predictions.

Figure \ref{eigen} plots the prediction accuracy for action $D$ as a function of the standard deviation of the initial attraction differences, $\sigma(\mathbf{q}(0))$. The result shows that the prediction accuracy is falling with that standard deviation, starting with as high as $98\%$ when $\sigma(\mathbf{q}(0))$ is below $0.1$, to close to $80\%$ when $\sigma(\mathbf{q}(0))$ approaches $3$. These outcomes are consistent with the fact that the eigenvector prediction is most valid when people are nearly indifferent between the two choices initially, but they also indicate that even when there is some variation in agents' initial attractions ($\sigma(\mathbf{q}(0)) \approx 3$), the explanatory power is still significant ($80\% \gg 50\%$).
\\
\begin{figure}[H]
  \centering \caption{Prediction accuracy vs. $\sigma(\mathbf{q}(0))$}
    \includegraphics[width=0.7\textwidth]{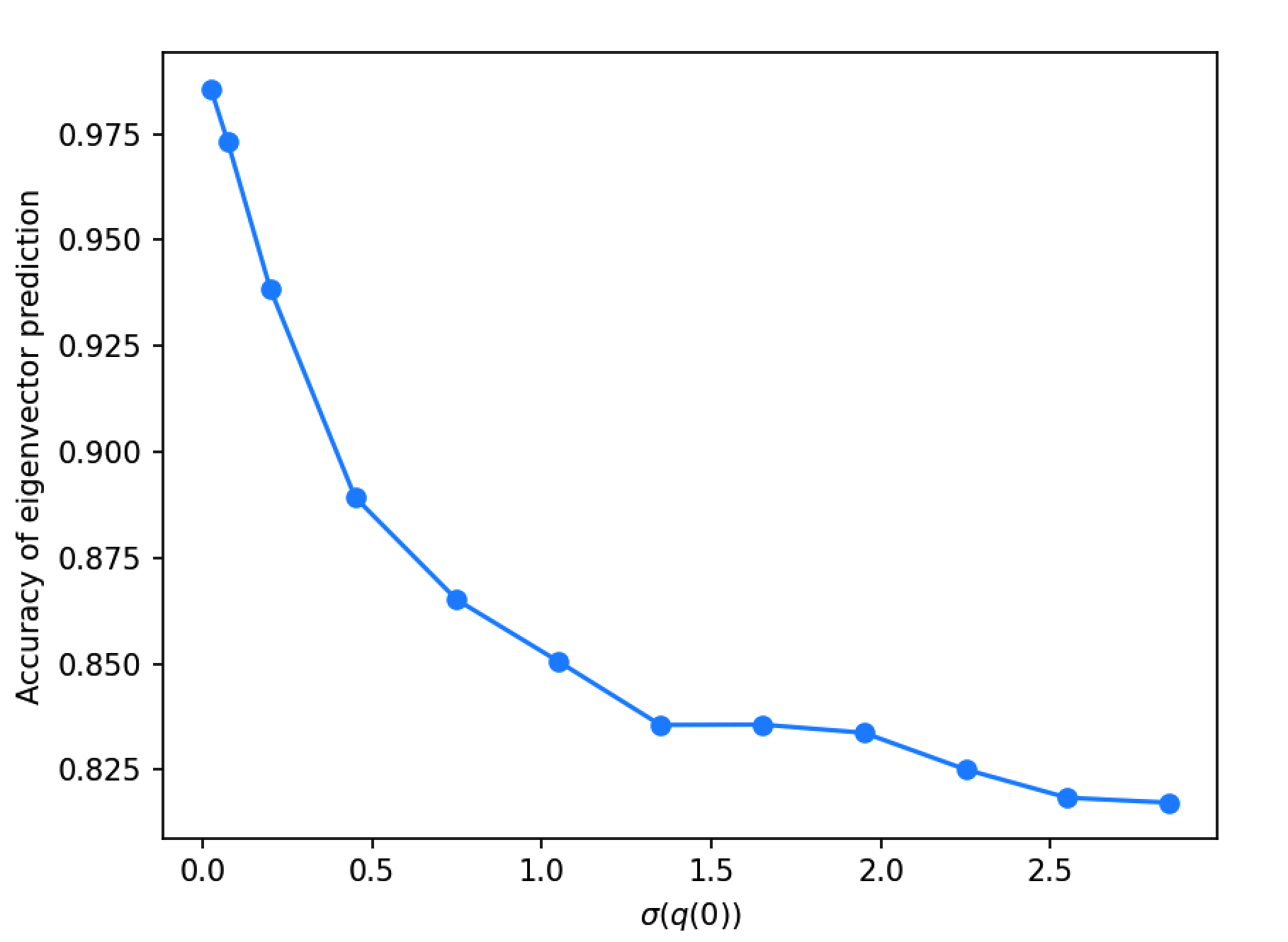} 
    \label{eigen}
    \caption*{\small Notes: This Figure plots the prediction accuracy for coordinating on action $D$ as a function of the standard deviation of the initial attraction differences, $\sigma(\mathbf{q}(0))$. Each dot represents the average prediction accuracy conditional on $\sigma(\mathbf{q}(0))$ being in an interval around the value in the x-axis. }
\end{figure}

Having shown the predictive power of the influence vector, I then explicitly examine how the influences of agents depend on their behavioural features and network positions, as the left eigenvector of the aforementioned Jacobian matrix, which summarises all the information of the model primitives, may not be very interpretable. 

\begin{figure}[H]
  \centering
  \caption{Frequency of playing action $D$ vs. statistics}
  \begin{minipage}[b]{0.49\textwidth}
    \includegraphics[width=\textwidth]{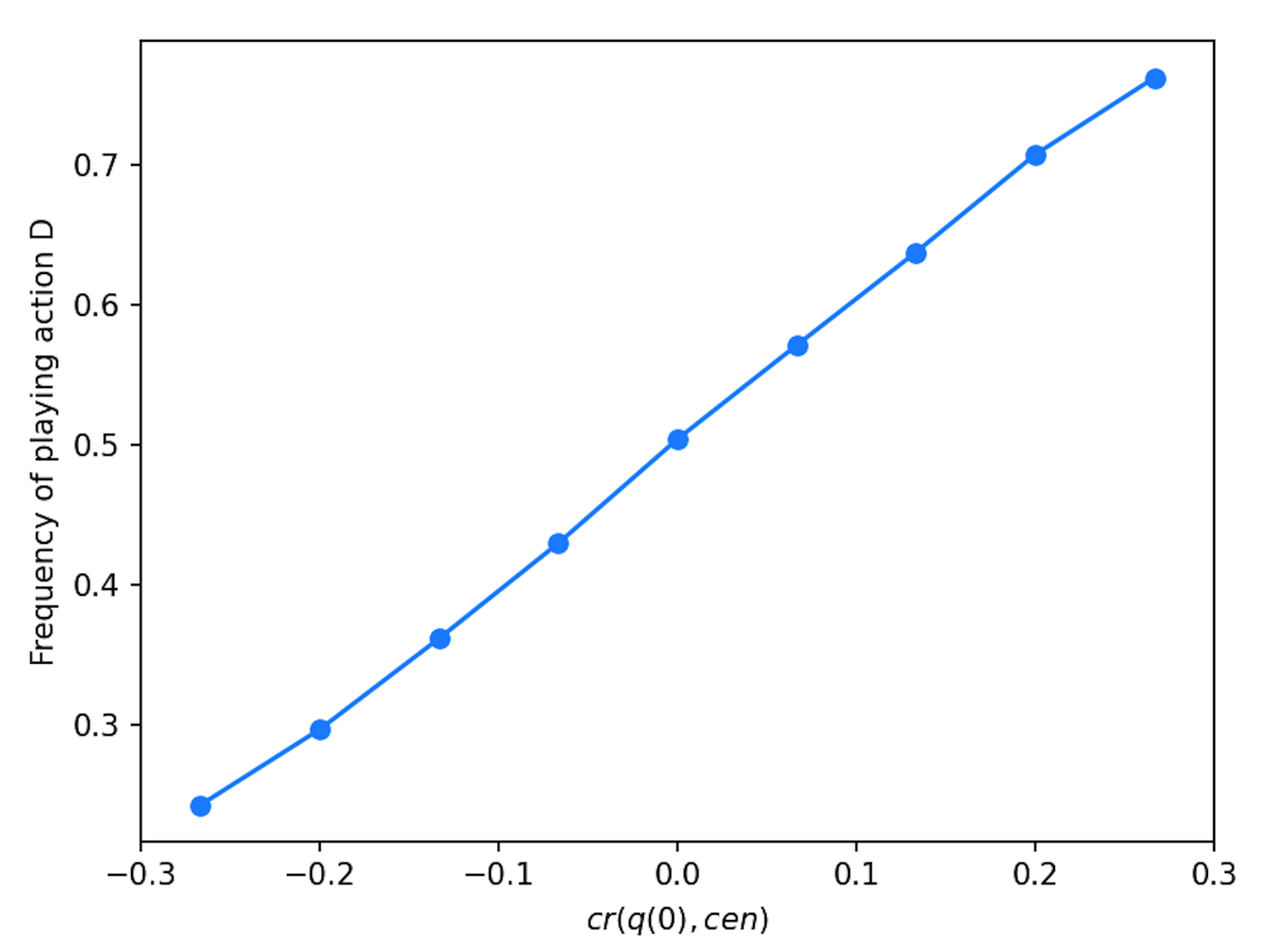}\subcaption{Frequency vs. $cr(\mathbf{q}(0), cen)$}
  \end{minipage}
  \hfill
  \begin{minipage}[b]{0.49\textwidth}
    \includegraphics[width=\textwidth]{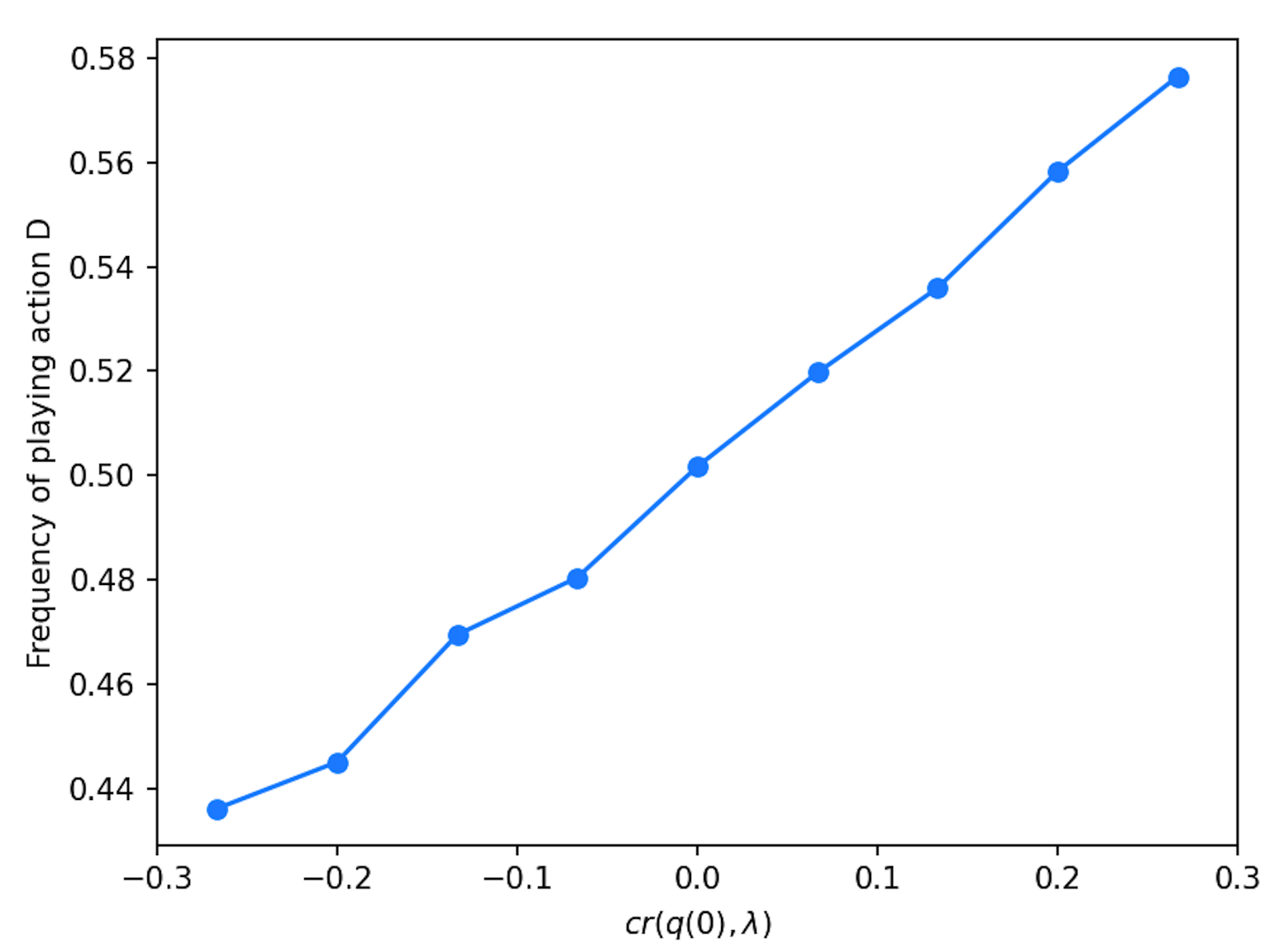}  \subcaption{Frequency vs. $cr(\mathbf{q}(0), \pmb{\lambda})$}
  \end{minipage}
  \label{part_hete}
  \caption*{\small Notes: Notations --- $cr(\cdot, \cdot)$ refers to the correlation coefficient between two variables. $cen$ refers to agents' network eigenvector centrality. (a) The frequency of the whole population favouring action $D$ conditional on the correlation coefficient between initial attraction differences and eigenvector centrality in the network, $cr(\mathbf{q}(0), cen)$. (b) The frequency of the whole population favouring action $D$ conditional on the correlation coefficient between initial attraction differences and decision accuracy, $cr(\mathbf{q}(0), \pmb{\lambda})$.}
\end{figure}

The partial dependence plot in Figure \ref{part_hete}-(a) shows that the frequency of the whole population favouring action $D$ is increasing with the correlation coefficient\footnote{Recall that in the data generating process, each variable including the initial condition is independent of each other, but there will still exist some small (empirical) correlation between the variables in the generated data due to randomness.} between $\mathbf{q}(0)$ and network eigenvector centrality,\footnote{This eigenvector centrality is the eigenvector corresponding to the largest eigenvalue of the adjacency matrix of the network and should not be confused with the influence vector represented by the left eigenvector corresponding to the largest eigenvalue of the Jacobian matrix.} while Figure \ref{part_hete}-(b) shows that that frequency is increasing with the correlation coefficient between initial attraction differences and decision accuracy, although to a less extent. Recall that $\mathbf{q}(0)$ is people's attraction differences between choice $D$ and choice $C$, so a positive correlation between $\mathbf{q}(0)$ and, say, $\pmb{\lambda}$ means that those favouring action $D$ are on average more accurate than those favouring action $C$. Thus, Figure \ref{part_hete}-(a) and -(b) indicate that, consistent with intuition, people with a higher network centrality and higher accuracy level have a larger influence on which outcome the population will favour as compared to their respective counterparts.

\begin{figure}[H]
  \centering
  \caption{Frequency vs. $cr(\mathbf{q}(0), \pmb{\lambda})$ for different $\mu(\pmb{\lambda})$}
    \includegraphics[width=0.6\textwidth]{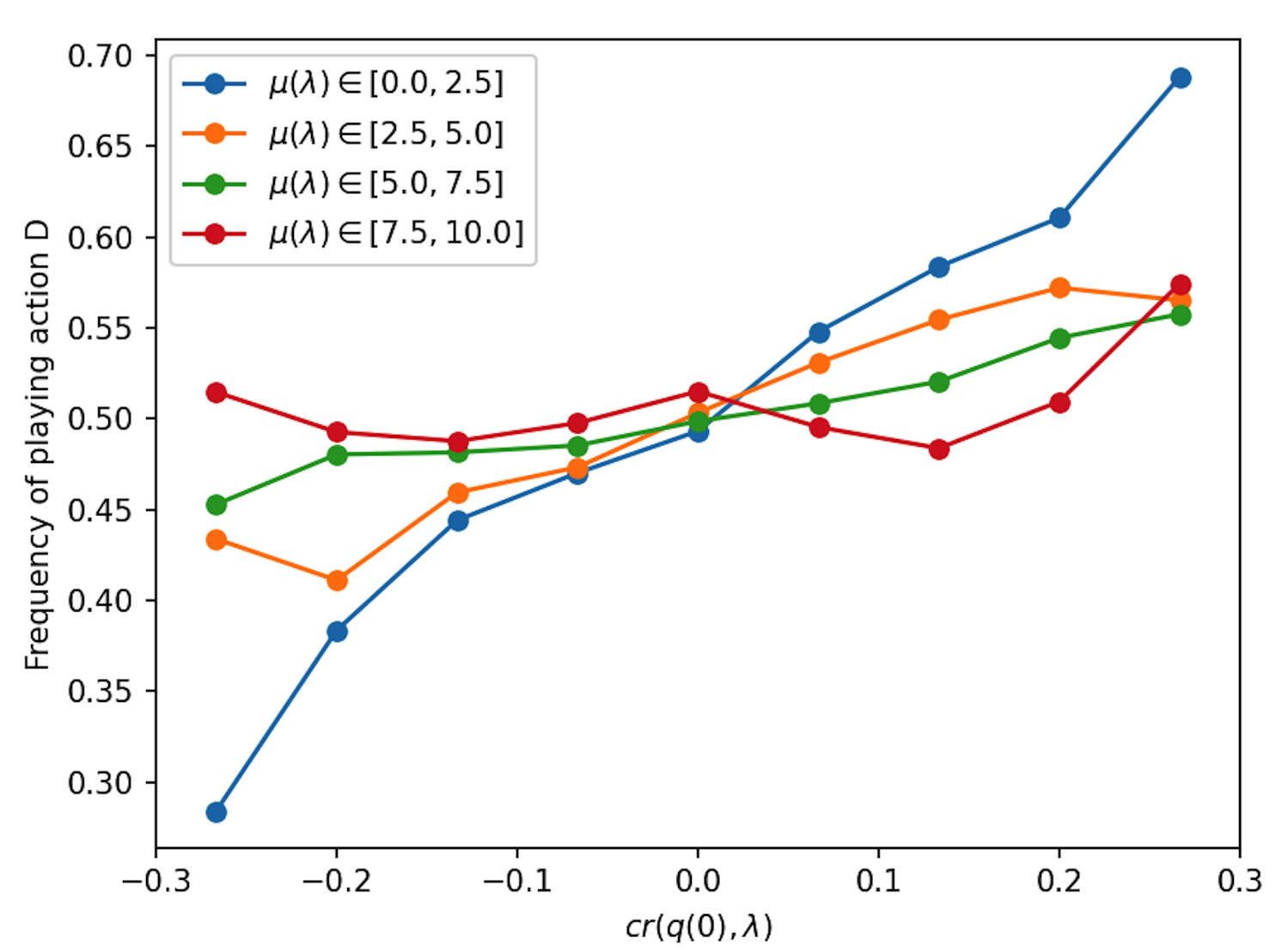}
  \label{int_freq}
  \caption*{\small Notes: This Figure plots the frequency of the whole population favouring action $D$ conditional on the correlation coefficient between initial attraction differences and decision accuracy, for each different value of average decision accuracy. This plot shows the interaction effects between $cr(\mathbf{q}(0), \pmb{\lambda})$ and $\mu(\pmb{\lambda})$ in predicting which choice the population will favour.}
\end{figure}

Figure \ref{int_freq} shows that the impact of $cr(\mathbf{q}(0), \pmb{\lambda})$ on the coordination outcome is larger when the average decision accuracy is small as compared to when it is high. In particular, people having a large $\lambda$ seem to make a significant difference only when the overall $\lambda$ is below $2.5$ (the blue curve). The intuition is that when agents are overall not very accurate, then relatively highly accurate agents can lead the inaccurate people to choose the action they like. In contrast, when all agents have high accuracy, then those with an even higher decision accuracy than the average no longer make a significant difference.

\section{Model Extensions}
\label{sec: extension}
In this section, I discuss two model extensions. Section \ref{sub:time} reduces the assumption of time-invariant behavioural features and studies the case where the decision accuracy is increasing with time. Section \ref{sub:imitate} considers the situation in which agents have a high attention to the best payoff they experienced.

\subsection{Time-Variant Behavioural Parameters}
\label{sub:time}
It might be plausible that people's behavioural features are changing over the process of game play. This section relaxes the time-invariant assumption on the behavioural parameters in the baseline model. Specifically, I assume that people's accuracy level, $\pmb{\lambda}$, is increasing with time and goes to infinity as time goes to infinity. In other words, people may be inaccurate in decision making initially but the inaccuracy level drops as more experience is accumulated and vanishes eventually. For simplicity, parameters $\pmb{\psi}$ and $\pmb{\eta}$ are still assumed to be time-invariant.\footnote{It can be shown that the case of decreasing $\pmb{\psi}$ produces similar outcomes to the case of increasing $\pmb{\lambda}$.}

\begin{assumption}
\label{ass: behaviour}
Suppose $\pmb{\psi} > \pmb{0}$ and $\pmb{\eta} \in [\pmb{0}, \pmb{1}]$ are time-invariant and $\pmb{\lambda}(t)$ is increasing with $t$ and $\lim_{t \to \infty} \pmb{\lambda}(t) = \pmb{\infty}$.
\end{assumption}

Many results in the baseline model can be extended to the case of vanishing noises. For example, the action pattern will generally converge as in Proposition \ref{pro1}, and the relationship between limiting BE and NE as well as the monotonic relationship involving limiting BE and $\pmb{\eta}$ in the sense of Proposition \ref{pro: eta} which considers time-invariant $\pmb{\lambda} \to \pmb{\infty}$ also hold in the extended model as $\lim_{t \to \infty} \pmb{\lambda}(t) = \pmb{\infty}$.

Also, recall that Proposition \ref{pro2} studies the case where the (time-invariant) $\pmb{\lambda}$ is small. This condition will ultimately not hold when $t$ is large as $\lim_{t \to \infty} \pmb{\lambda}(t) = \pmb{\infty}$, but Proposition \ref{pro2} still have implication for the extended model, which is that starting with any given initial attractions, if people's accuracy level is sufficiently small in early periods, then the long-term action profile must be that everyone favours the risk-dominant option, as summarised in the following Corollary.

\begin{corollary}
\label{cro4}
Suppose that Assumptions \ref{ass:network}, \ref{ass: risk-effi}, and \ref{ass: behaviour} hold. Then for any given $\mathbf{q}(0)$, there exists a vector $\underline{\pmb{\lambda}}$ and a positive number $\underline{T}$ such that if $\pmb{\lambda}(t) < \underline{\pmb{\lambda}}$ for any $t < \underline{T}$, then when $t \to \infty$, all players favour the risk-dominant option.
\end{corollary}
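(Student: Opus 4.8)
The plan is to combine the mechanism of Proposition~\ref{pro2} with a trapping-region argument that survives the later increase of $\pmb{\lambda}$. Recall from the proof of Proposition~\ref{pro2} that, under Assumption~\ref{ass: risk-effi}, the inequality $\dot q_i > -\psi_i q_i - 4 z d_i$ holds unconditionally, so the region $\mathcal R := \{ \mathbf q : q_i \geq -\tfrac{4 z d_i}{\psi_i} \ \forall i \}$ is forward-invariant at all times regardless of the current value of $\pmb{\lambda}(t)$ (the bound does not involve $\pmb{\lambda}$). Moreover, inside the slab $q_i \in [-\tfrac{4 z d_i}{\psi_i}, 0]$, as $\pmb{\lambda} \to \pmb 0$ the drift $\dot q_i$ converges uniformly to $-\psi_i q_i + \left(\tfrac{w+x-y-z}{2}\right)\tfrac{(1+\eta_i)}{2} d_i$, which is strictly positive by $w+x-y-z>0$; so there is a vector $\underline{\pmb\lambda}$ and a constant $c>0$ such that whenever $\pmb\lambda(t) < \underline{\pmb\lambda}$, we have $\dot q_i \geq c\, d_i$ for every $i$ with $q_i \leq 0$, uniformly in the positions of the other coordinates.

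First I would fix $\mathbf q(0)$ and choose $\underline{\pmb\lambda}$ as above. Define $\underline{T}$ to be large enough that, starting anywhere in the compact box $\prod_i [-\tfrac{4 z d_i}{\psi_i}, \; \max_i q_i(0) \vee 0]$ (which the trajectory never leaves below, by invariance of $\mathcal R$, and never leaves above, since the drift is bounded above and $\psi_i>0$ gives an a priori upper envelope), the constant positive drift $\dot q_i \geq c\, d_i$ forces every coordinate to exceed $0$; concretely $\underline{T} > \max_i \tfrac{1}{c d_i}\big(\tfrac{4 z d_i}{\psi_i}\big) = \tfrac{4 z}{c}\max_i \tfrac{1}{\psi_i}$ suffices, because any coordinate sitting at or below $0$ is being pushed up at rate at least $c d_i$ and has at most $\tfrac{4 z d_i}{\psi_i}$ of ground to cover. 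Hence if $\pmb\lambda(t) < \underline{\pmb\lambda}$ for all $t < \underline{T}$, then $\mathbf q(\underline{T}) > \mathbf 0$.

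The second half of the argument is to show that once $\mathbf q(t^0) > \mathbf 0$ for some $t^0$ (here $t^0 = \underline{T}$), the orthant $\{\mathbf q > \mathbf 0\}$ is forward-invariant for the full time-varying system, so that in particular $\mathbf q(t) > \mathbf 0$ for all $t \geq t^0$, and therefore any limit point as $t\to\infty$ has $\mathbf q^* \geq \mathbf 0$ — i.e.\ all players favour the risk-dominant option $D$ in the long run. Invariance of the positive orthant follows by checking the boundary: if $q_i = 0$ while $q_j \geq 0$ for all $j$, then $p_i = \tfrac12$ and $p_j \geq \tfrac12$, and plugging into (\ref{big q}) gives $\dot q_i = (p_i+\eta_i(1-p_i))\sum_{j\in N_i}(p_j w + (1-p_j)x) - (1-p_i+\eta_i p_i)\sum_{j\in N_i}(p_j y + (1-p_j)z)$, which at $p_i=\tfrac12$ equals $\tfrac{1+\eta_i}{2}\sum_{j\in N_i}\big[(p_j w + (1-p_j)x) - (p_j y + (1-p_j)z)\big]$; since $w>y$, $x>y$, $w>x$ and $p_j\geq\tfrac12$, each bracket is at least $\tfrac12(w-y)+\tfrac12(x-z)$, and $w+x-y-z>0$ makes this strictly positive, so $\dot q_i > 0$ on that face. (One must also note the trajectory cannot escape to $q_i = +\infty$ in finite time, which is immediate from the linear upper bound $\dot q_i \leq -\psi_i q_i + 4 z d_i$.) Combining: the trajectory enters $\{\mathbf q>\mathbf 0\}$ by time $\underline{T}$ and stays there, and by Proposition~\ref{pro1} (whose cooperativity argument is unaffected by a monotone-in-$t$ rescaling of $\lambda$, or one simply uses convergence of the autonomous tail once $\pmb\lambda$ has effectively stabilised at large values — here one should invoke the fact already asserted in the excerpt that Proposition~\ref{pro1}'s conclusion carries over under Assumption~\ref{ass: behaviour}) it converges to a fixed point, necessarily with $\mathbf q^* \geq \mathbf 0$.

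The main obstacle is the non-autonomy: the clean statements of Propositions~\ref{pro1} and~\ref{pro2} are for fixed $\pmb\lambda$, and I must be careful that (a) the trapping region $\mathcal R$ and the positive-orthant invariance are genuinely $\pmb\lambda$-independent (they are, since both rest on sign conditions on payoff differences, not on $\pmb\lambda$), and (b) the choice of $\underline{\pmb\lambda}$ and $\underline{T}$ can be made to depend only on $\mathbf q(0)$ and the fixed primitives, not circularly on the trajectory — this is handled by the a priori box bound above. A secondary subtlety is the final appeal to convergence of a non-autonomous system for $t > \underline T$; the cleanest route is to observe that we have already confined the state to the compact set $\{\mathbf q \geq \mathbf 0,\ q_i \leq 4 z d_i/\psi_i + (\text{const})\}$ and that within the positive orthant the long-run behaviour is governed by the $\pmb\lambda\to\pmb\infty$ limit, for which the excerpt already grants the analogue of Proposition~\ref{pro1}; alternatively one can simply conclude the weaker (but sufficient) statement that every limit point satisfies $\mathbf q^*\geq\mathbf 0$, which already says all players favour the risk-dominant option.
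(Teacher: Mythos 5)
Your proof is correct and is essentially the argument the paper intends when it declares the proof ``analogous to that of Proposition~\ref{pro2}'' and omits it: you reuse the $\pmb{\lambda}$-free lower bound $\dot q_i > -\psi_i q_i - 4 z d_i$, the small-$\pmb{\lambda}$ limit of the drift on the slab $q_i \in [-4zd_i/\psi_i,\,0]$, and the cooperativity of $F$ to get a uniform positive drift there. The two pieces you add --- an explicit $\underline{T}$ from that uniform drift, and the $\pmb{\lambda}$-independent forward invariance of the positive orthant (via the boundary computation $\dot q_i \geq \tfrac{1+\eta_i}{2} d_i \tfrac{w+x-y-z}{2} > 0$ at $q_i=0$, $q_j\geq 0$) so the conclusion survives the later growth of $\pmb{\lambda}(t)$ --- are exactly the glue the non-autonomous, trajectory-based statement needs beyond the fixed-point-location argument of Proposition~\ref{pro2}, and both are sound.
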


The proof is analogous to that of Proposition \ref{pro2} so is omitted. This Corollary shows that given any (finite) initial attractions, if agents are inaccurate for a sufficiently long period of time, then the long-term action pattern must be coordination on the risk-dominant option even if all agents are in favour of the efficient option initially. Same as Proposition \ref{pro2}, this is because inaccuracy (noises) in action erodes the overall tendency of playing the efficient choice in the population and this tendency will gradually drive the whole population to favour the risk-dominant option.

In addition to the analogous results to the baseline model, the introduction of time-variant $\pmb{\lambda}$ has implications for the influence questions in terms of which long-term action patterns the population will play. As different agents may have different increasing rate of accuracy level, the time-varying feature of $\pmb{\lambda}$ provides another dimension of heterogeneity that can play a role in equilibrium selection, as shown in the next Proposition.

\begin{proposition}
Suppose that Assumptions \ref{ass:network}, \ref{ass: risk-effi}, and \ref{ass: behaviour} hold. For any given $\mathbf{q}(0)$, let $N_C(0)$ and $N_D(0)$ be players that initially favour $C$ and $D$, respectively. Then, \\
(1) For any network $G$, there exist some behavioural parameters such that all agents' actions converge to the risk-dominant action $D$. \\
(2) If the group $N_D(0)$ does not contain a subset with network cohesiveness (weakly) larger than $\frac{2(z-x)}{z-x+w-y}$, then there exist some behavioural parameters such that all agents' actions converge to the efficient action $C$.
\end{proposition}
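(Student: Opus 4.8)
Part~(1) is essentially immediate from Corollary~\ref{cro4}. Given $\mathbf{q}(0)$, that corollary produces a threshold vector $\underline{\pmb{\lambda}}$ and a time $\underline{T}$ such that any admissible accuracy path that stays below $\underline{\pmb{\lambda}}$ on $[0,\underline{T})$ drives everyone to the risk-dominant action. So I would simply exhibit one such path compatible with Assumption~\ref{ass: behaviour}, e.g. $\lambda_i(t)=\tfrac{\underline{\lambda}_i}{2}\cdot\tfrac{t+\underline{T}}{\underline{T}}$ (strictly increasing, divergent, and $<\underline{\lambda}_i$ on $[0,\underline{T})$), together with any $\pmb{\psi}>\pmb{0}$ and $\pmb{\eta}\in[\pmb{0},\pmb{1}]$; nothing about $G$ matters, since Corollary~\ref{cro4} holds for every connected network.

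For Part~(2) the plan is a ``contagion of $C$'' construction that exploits the noise. Write $\theta:=\frac{2(z-x)}{z-x+w-y}$; Assumption~\ref{ass: risk-effi} gives $\theta\in(0,1)$. The hypothesis ``$N_D(0)$ contains no nonempty subset of cohesiveness $\ge\theta$'' is equivalent, by the standard peeling argument (iteratively delete a vertex having strictly less than a fraction $\theta$ of its neighbours inside the current remaining set), to an ordering $N_D(0)=\{i_1,\dots,i_m\}$ in which each $i_k$ has more than a fraction $1-\theta$ of its neighbours in $N_C(0)\cup\{i_1,\dots,i_{k-1}\}$. I would take $\pmb{\eta}=\pmb{1}$ (belief-based learning): then $\dot q_i=-\psi_i q_i+\sum_{j\in N_i}\big[p_j(w-y)-(1-p_j)(z-x)\big]$ is independent of $p_i$, so the ``stay-locked'' condition for an agent already at $p\approx0$ and the ``get-pulled-down'' condition for a new one become the same inequality. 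The key local fact is: for an agent near $q_i=0$ with a fraction $\alpha$ of its neighbours at $p\approx0$ and the rest at $p\approx\tfrac12$ (the state in which we hold the unprocessed $N_D(0)$ neighbours, by keeping their $\lambda$ small), the forcing term has the sign of $-\alpha(z-x)+\tfrac{1-\alpha}{2}(w+x-y-z)$, which is negative exactly when $\alpha>1-\theta$ --- precisely what the peeling order delivers. This is also where the factor $\theta=2r$ rather than the Nash threshold $r$ appears: an unprocessed neighbour contributes at the 50--50 rate, not the pure-$D$ rate.

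The construction then runs in coordinated phases. Keep $\lambda_i$ large for $i\in N_C(0)$ from the outset (so, since $q_i(0)<0$ there, $p_i\approx0$); take $\psi_i$ very large and $\lambda_i$ small for $i\in N_D(0)$ until $i$ is ``processed''; moderate $\psi$ elsewhere. Large $\pmb{\psi}$ makes each $q_{i_k}$ collapse, on a time scale $\to0$, to its quasi-static value, which by the local fact is negative once $i_k$'s already-processed neighbours sit at $p\approx0$; so $q_{i_1}$ crosses $0$, I then ramp $\lambda_{i_1}$ up so that $p_{i_1}\approx0$, which makes $q_{i_2}$'s quasi-static value negative, and so on along the peeling order. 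After the cascade every vertex has all of its neighbours at $p\approx0$, so $\dot q_i\approx-\psi_iq_i-d_i(z-x)<0$ in a neighbourhood of the all-$C$ region; the trajectory enters a forward-invariant set on which $\mathbf{q}<\mathbf{0}$, and by Proposition~\ref{pro1} it then converges, necessarily to a fixed point where everyone favours $C$. Finally one lets all $\lambda_i(t)\to\infty$.

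The main obstacle is making this choreography rigorous, in two respects. First, an agent $i\in N_C(0)$ all of whose neighbours lie in $N_D(0)$ has $\alpha=0<1-\theta$ initially, so its forcing is positive and $q_i$ drifts \emph{up} toward $0$; it must not reach $0$ before the cascade ends. I would handle this by a timing estimate: $\min_{i\in N_C(0)}|q_i(0)|>0$ is fixed, the upward drift rate is bounded independently of $\pmb{\psi}_{N_D(0)}$ and $\pmb{\lambda}$, whereas the whole cascade (initial collapse plus $m$ ramp-ups) can be made arbitrarily fast by enlarging $\pmb{\psi}_{N_D(0)}$ and steepening the $\lambda$-ramps. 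Second, one must propagate uniform $\varepsilon$-bounds through all $m$ steps, so that every strict sign inequality above survives and every processed agent stays locked; here the slack $\alpha-(1-\theta)$ is bounded below by a positive constant because the fractions $\alpha$ are rationals with denominator $\le n$. Both are routine continuity/compactness arguments; I would organise them as a finite induction on $k$, with inductive hypothesis that after step $k$ the agents of $N_C(0)\cup\{i_1,\dots,i_k\}$ have $q<0$ and $p<\varepsilon$, while the remaining $N_D(0)$ agents have $|q|<\varepsilon$.
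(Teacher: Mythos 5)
Your proposal is correct and follows essentially the same route as the paper: part (1) by invoking Corollary \ref{cro4}, and part (2) by a contagion construction in which the not-yet-converted members of $N_D(0)$ are held near $p\approx\tfrac12$ via small $\lambda$, which is exactly where the doubled threshold $\frac{2(z-x)}{z-x+w-y}$ (rather than the Nash threshold) comes from. Your explicit peeling order, the quasi-static sign computation, and the timing estimate guarding against $N_C(0)$ agents with mostly-$D$ neighbourhoods drifting upward are refinements of details the paper's own sketch glosses over, but they do not constitute a different approach.
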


 \begin{proof}
The first statement follows from Corollary \ref{cro4}.

To show the second statement, suppose initially all agents in $N_D(0)$ ($N_C(0)$) have sufficiently low (large) $\lambda$, then an agent in $N_D(0)$ connected with the component $N_C(0)$ updates her attraction level according to:
\begin{align}
\begin{split}
\label{pro4-1}
 \dot{q}_{i} \approx & -\psi_i q_{i} + (\frac{1+\eta_i}{2})  \left( \frac{1}{2}(w+x-y-z)n^D_i+ (x-z) (d_i-n^D_i ) \right)
  \end{split}
\end{align}
where $n^D_i$ is the number of $i$'s neighbours favouring $D$. Thus, for large $\psi_i$, $\dot{q}_i <0$ if $\frac{n^D_i}{d_i} < \frac{2(z-x)}{z-x-y+w}$. Therefore, if the agents in $N_D(0)$ who have connection with $N_C(0)$ have sufficiently large $\psi$ and small $\lambda$ and the $N_C(0)$ group has sufficiently small $\psi$ and large $\lambda$, then the group in $N_D(0)$ adjacent to $N_C(0)$ will transform to favouring $C$ within some time $\tau$. Repeating this process and constructing the behavioural parameters for the updated $C$ group and $D$ group in the same way, it follows that all agents will favour action $C$ (the efficient choice) within some time.
 \end{proof}

Note that the threshold $\frac{2(z-x)}{z-x+w-y}$ is twice that of the original cascade model (\cite{morris2000contagion}), indicating that when agents have heterogeneous behavioural features, the requirements for the whole population to transform into a new state can be much looser than in the original model which only considers best responses. Agents with a higher accuracy in decision-making and a longer memory tend to have a larger influence on the action dynamics of the whole network than their respective counterparts do. Highly accurate and retentive agents are rather stubborn to their initial attractions, which might lead their neighbours to switch to the actions they favour. The implication is that if there are systematic differences in behavioural features across the group favouring different options, then the threshold of cascade can be different as compared to the case that only considers best responses.

\subsection{Reinforcing the Best}
\label{sub:imitate}
Some papers (e.g., \citet{robson1996efficient} and \citet{alos2008contagion}) have shown that the efficient option of the coordination game survives equilibrium selection when agents conduct ``imitating-the-best" behavioural rule, which is in contrast to most other papers that select the risk-dominant option (e.g., \citet{ellison1993learning}) in the unique stochastically stable state. To show how these outcomes may be reconciled in this paper, I consider a behavioural feature where players put a high weight on the best payoff they experienced. This may be captured by introducing two additional parameters $\gamma$ and $\underline{\pi}$ in the EWA framework. 
\begin{align}
\begin{split}
\label{attraction updates}
&\dot{a}_{i,1} =-\psi_i a_{i,1} + (p_{i}+\eta_i(1-p_{i})) \sum_{j \in N_i} \left(p_{j} (w - \underline{\pi}) ^{\gamma_i} + (1-p_{j})(x- \underline{\pi})^{\gamma_{i}}  \right) \\
& \dot{a}_{i,0} =-\psi_i a_{i,0} + (1-p_{i}+\eta_i p_{i}) \sum_{j \in N_i}   \left(p_{j} (y- \underline{\pi})^{\gamma_i} + (1-p_{j})(z- \underline{\pi})^{\gamma_i} \right)
\end{split}
\end{align}
where $\underline{\pi}$ is some value smaller than the lowest possible payoff. If $\gamma_i>1$, then agents are risk-loving as they have a tendency of being attracted to the action that resulted in the highest payoff. Note that in contrast, $\gamma_i<1$ reflects risk aversion as in \cite{fudenberg2019predicting}.

It can be shown that there exists a $\hat{\gamma} >1$ such that when $\gamma_i > \hat{\gamma}$ for all $i$, then the action profile with all players favouring the efficient option has a larger basin of attraction than that with all players favouring the risk-dominant option. In particular, if agents are forgetful and inaccurate, then the long-term action profile must be such that all agents favour the efficient option regardless of their initial attractions. The intuition is that if people put a high weight on the highest payoff they experienced, then a small tendency of playing the efficient option in the population may be augmented as it could generate the highest possible return.

This is loosely consistent with the outcomes indicating that imitating-the-best behavioural rule results in equilibrium selection favouring the efficient rather than the risk-dominant option (e.g., \cite{robson1996efficient} and \cite{alos2008contagion}). Note that this result lies in that people put a high weight on the ``best" outcome they witnessed instead of in the ``imitation" behaviour. Indeed, if people perform the ``imitating-the-average" rule, then the risk-dominant option (instead of the efficient option) will still be selected.

\section{Concluding Remarks}
\label{sec:con}
This paper studies network coordination games with bounded-rational agents who conduct experience-weighted attraction learning. The main distinction lies in the consideration of multiple different behavioural features and the heterogeneities in them across agents. The long-term action profile of the game is a high-dimensional function of the network structure, payoff matrix, and all agents' behavioural features and initial attractions. As high-dimensional mappings generally lack tractability and interpretability, I explore what patterns can be obtained from it. 

I show that the set of possible long-term action profiles can be largely different when the behavioural features vary. When agents are sufficiently forgetful and inaccurate, they will favour the risk-dominant option in the long run regardless of their initial predispositions. When agents are sufficiently retentive and accurate, the set of possible long-term action profiles can be richer than that of Nash equilibrium, depending on the payoff matrix and on the attention people pay to unselected choices. Possible long-term action profiles can be richer under intermediate level of forgetfulness and decision accuracy than under the above two extreme cases, meaning that the number of them needs not be monotonic in forgetfulness or in decision accuracy. In terms of which long-term action profile will be played when there are multiple one, I show that it can be determined by a weighted sum of agents' initial attractions provided that agents’ initial attractions are sufficiently close to some neutral level, with the weights being the principal left eigenvector of a Jacobian matrix. This eigenvector reflects agents' influences which summarises the information about the distribution of behavioural features as well as network and payoff structure.

There are multiple directions of future research. First, it is natural to study the properties of long-term action profiles of other network games and under other types of behavioural features in the context of the EWA model. The reinforcing-the-best behavioural rule and the consideration of aspiration levels briefly described in Section \ref{sec: extension} and Appendix, respectively, are some examples. With regard to games, the network coordination games studied in the paper exhibit strategic complements across agents. It might be interesting to examine the properties of the EWA dynamics in general games with strategic complements and strategic substitutes.

Moreover, from a numerical point of view, more systematic data analysis of the simulation outcomes might provide further insights into the question. For example, the relationship among variables from the simulations in this paper could be learned using a graphical neural network as a surrogate model so that the patterns between long-term action profile and the whole network inputs can be detected without manually defining and measuring some statistics. For example, it might be interesting to examine whether ML methods can predict agents' influences better than the eigenvector discussed in the paper does. 

\section{Appendix}
\label{sec:app}

\par\noindent\textbf{Aspiration-based reinforcement learning:}
\label{sub: aspiration}

Consider that each agent has an aspiration level of utility $\hat{u}_i$ --- an agent ``aspires"  that she could get an utility $\hat{u}_i$ and she will compare the payoff she obtained (or could have obtained) from playing an action with that aspiration level, and if it is higher (lower) than the aspiration level, then the attraction of that action tends to increase (decrease). This can be represented by
\begin{align}
\begin{split}
\label{attraction updates}
&\dot{a}_{i,1} =-\psi_i a_{i,1} + (p_{i}+\eta_i(1-p_{i})) \sum_{j \in N_i} \left(p_{j} w + (1-p_{j})x -\hat{u}_i \right) \\
& \dot{a}_{i,0} =-\psi_i a_{i,0} + (1-p_{i}+\eta_i p_{i}) \sum_{j \in N_i}   \left(p_{j} y + (1-p_{j})z -\hat{u}_i  \right)
\end{split}
\end{align}
Thus, the analysis of this dynamical game is equivalent to that with the stage payoff:
\begin{table}[H]
\centering
\caption{a canonical coordination game}
\label{table1}
 \begin{tabular}{|c |c | c |} 
 \hline
  & $C$ & $D$ \\ [0.5ex]
 \hline
 $C$ & $z-\hat{u}_i$ & $y-\hat{u}_i$ \\ [0.5ex]  \hline
 $D$ & $x-\hat{u}_i$ & $w -\hat{u}_i$ \\
  [1ex] 
 \hline
 \end{tabular}
\end{table}
where $\hat{u}_i$ is $i$'s aspiration level. Thus, the baseline analysis is equivalently to the case in which all agents' aspirations are zero. Also, if all agents have the same (possibly non-zero) aspiration level, then the analysis is equivalently to a parallel transformation of the payoff matrix ($\Pi=\hat{\Pi}-\hat{u}$). This indicates that the baseline analysis assuming $w>0$ and $z>0$ is without loss of generality in the above sense (as long as the aspiration levels are homogeneous across agents). The inclusion of heterogeneous and time-varying aspiration level is beyond the scope of this research and is a potential future direction.\footnote{See e.g., \citet{karandikar1998evolving} and \cite{borgers2000naive} who consider evolving aspiration levels.}
\\
\par\noindent\textbf{Proof of Proposition \ref{pro: eta}}:

(1) Denote $m_i$ as the number of $i$'s neighbours choosing action $D$. In a NE, $s_i^*=D$ indicates that \begin{align}
\label{p1-3}
 \frac{m_i^*}{d_i} > \frac{ z -x}{w-x+ z- y} :=r
\end{align}
and 
$s_i^*=C$ indicates that \begin{align}
\label{p1-4}
 \frac{d_i-m_i^*}{d_i} > \frac{ w -y}{ w- x + z-y} := 1-r
\end{align} Thus, suppose that there is a NE such that group $N_C^*$ chooses action $C$ and group $N_D^*$ chooses action $D$ where $N_C^* \cup N_D^* =N$, then as in \cite{morris2000contagion}, $N_D^*$ forms a $r$-cohesive set while $N_C^*$ forms a $(1-r)$-cohesive set.

Now consider BE. Note that in a fixed point, an individual $i$ must have $p_i=p_i^*$ such that
\begin{align}
\label{exp: pure}
   -\frac{\psi_i}{\lambda_i} \ln \left( \frac{1}{p_i}-1 \right) =  (p_{i}+\eta_i(1-p_{i})) u(p_i=1, p_{-i})- (1-p_{i}+\eta_i p_{i})  u(p_i=0, p_{-i})
\end{align}
given $p_{-i}=p^*_{-i}$.

Consider a vector $\mathbf{p}$ very close to $\mathbf{p^*}$, for an agent $i$ who has $s_i^*=1$ in the NE,
\begin{align}
\label{exp: ne1}
\begin{split}
  & (p_{i}+\eta_i(1-p_{i})) u(p_i=1, p_{-i})- (1-p_{i}+\eta_i p_{i})  u(p_i=0, p_{-i}) \\
  = \ &   u(p_i=1, p_{-i}) - \eta_i  u(p_i=0, p_{-i}) + O (1-p_i) \\
  =\ & u(p^*_i=1, p^*_{-i}) - \eta_i  u(p^*_i=0, p^*_{-i})+ O( || \mathbf{p}-\mathbf{p}^* ||)  \\
  =\ & m^*_i w +(d_i -m_i^*) x - \eta_i (m^*_i y + z ( d_i-m^*_i))+ O( || \mathbf{p}-\mathbf{p}^* ||)  \\
  =\ &  d_i \left[(w-x-\eta_i y +\eta_i z)\frac{m_i^*}{d_i} +x-\eta_i z \right]  + O( || \mathbf{p}-\mathbf{p}^* ||)   \\
  > \ &  d_i \left[(w-x-\eta_i y +\eta_i z)(r+\epsilon_1) +x-\eta_i z \right]  + O( || \mathbf{p}-\mathbf{p}^* ||)  \\
  = \ &  d_i \frac{(1-\eta_i)(wz-xy) }{w+z-x-y}+d_i (w-x-\eta_i y +\eta_i z) \epsilon_1 + O( || \mathbf{p}-\mathbf{p}^* ||) \\
  \geq \ & 0 +d_i (w-x-\eta_i y +\eta_i z) \epsilon_1 + O( || \mathbf{p}-\mathbf{p}^* ||) \\
  > \ & 0
\end{split}
\end{align}
when $\mathbf{p}$ is close to $\mathbf{p}^*$. The $\epsilon_1>0$ is some constant where the strictness of NE is used. The second last line uses the condition that $wz>xy$.
Similarly, one can show that when $\mathbf{p}$ is very close to $\mathbf{p^*}$, for an agent $i$ who has $s_i^*=0$ in the NE,
\begin{align}
\label{exp: ne0}
\begin{split}
  & (p_{i}+\eta_i(1-p_{i})) u(p_i=1, p_{-i})- (1-p_{i}+\eta_i p_{i})  u(p_i=1, p_{-i}) \\
   < \ & - \epsilon_0 + O( || \mathbf{p}-\mathbf{p}^* ||)
\end{split}
\end{align}
for some constant $\epsilon_0>0$. Thus, there exists a neighbour $N(\mathbf{p^*})$ of $\mathbf{p^*}$ such that for any $\mathbf{p} \in N(\mathbf{p^*})$, the right hand side (RHS) of (\ref{exp: pure}) is strictly positive for any $i$ with $s_i^*=1$ and strictly negative for any $i$ with $s_i^*=0$ when $\mathbf{p}\in N(\mathbf{p^*})$.

Define a mapping $\mathbf{p'} := H(\mathbf{p})$ be such that for each $i$, $p'_i : = H_i(\mathbf{p})$ is a solution to equation (\ref{exp: pure}) given $p_{-i}$ and that $p'_i$ is the solution that is the closet to $p^*_i$.\footnote{There must exist at least one solution and may exist multiple solutions.} Construct a $n$-dimensional compact set $\mathfrak{P}=\Pi_i [\underline{p}_i, \overline{p}_i] \subseteq N(\mathbf{p^*})$ be such that for each $i$, $\overline{p}_i=1$ if $s_i^*=1$, and $\underline{p}_i=0$ if $s_i^*=0$. Fix such compact set $\mathfrak{P}$. Then for any $\mathbf{p} \in \mathfrak{P}$, we have that when $\frac{\psi_i}{\lambda_i}$ is sufficiently small, then $1> H_i(\mathbf{p}) > \underline{p}_i$ for any $i$ with $s_i^*=1$ and $\overline{p}_i> H_i(\mathbf{p}) > 0$ for any $i$ with $s_i^*=0$. Thus, $H$ is a mapping from $\mathfrak{P}$ to $\mathfrak{P}$ and since $H_i(\cdot)$ is continuous, applying the Brouwer's fixed-point theorem, there must exists a fixed point\footnote{It can also be easily shown that the fixed point is stable by calculating the limit of the Jacobian matrix. Details are omitted.} in $\mathfrak{P}$. As $\pmb{\psi} \to \mathbf{0}$ or $\pmb{\lambda} \to \pmb{\infty}$, the fixed point $\mathbf{p^*} \in \mathfrak{P}$ converges toward the point $\mathbf{s}^*$ since the LHS of \ref{exp: pure} goes to zero as $\frac{\psi_i}{\lambda_i}$ goes to zero unless $p_i$ approaches $0$ or $1$, which also indicates that $d(\mathbf{p}^*, \mathbf{s}^*) \to 0$, so $\mathbf{s}^*$ is a limiting BE.

Now suppose for a pure-strategy profile $\mathbf{s}$, as $\pmb{\psi} \to \pmb{0}$ or $\pmb{\lambda} \to \pmb{\infty}$, $d(\mathbf{s}, \mathcal{B}) \to 0$ under $\pmb{\eta}''$. This means that as $\pmb{\psi} \to \pmb{0}$ or $\pmb{\lambda} \to \pmb{\infty}$, there always exists a $\mathbf{p}^* \in BE$ for each parameter value such that $p^*_i \to 0$ or $1$ in the limit for any $i$.

For $p^*_i \to 1$, it can be shown that \begin{align}
\label{p1-1}
\begin{split}
     &  w m_i^* +x (d_i-m_i^*) > \eta_i y m_i^* + \eta_i z (d_i -m_i^*) \\
 \Longrightarrow \ & \frac{m_i^*}{d_i} > \frac{\eta_i z -x}{w-x+\eta_i z- \eta_i y} := r_1(\eta_i)
\end{split}
\end{align}

Similarly, for $p^*_i \to 0$, we have that \begin{align}
\begin{split}
\label{p1-2}
 &  y m_i^* +z (d_i-m_i^*) > \eta_i w m_i^* + \eta_i x (d_i -m_i^*) \\
 \Longrightarrow \ & \frac{d_i-m_i^*}{d_i} > \frac{\eta_i w -y}{z-y+\eta_i w- \eta_i x} := r_2(\eta_i)
 \end{split}
\end{align}

Note that 
\begin{align}
\begin{split}
\label{p1-5}
 \frac{\partial r_1( \eta_i)}{\partial  \eta_i} & = \frac{z(w-x+\eta_i z- \eta_i y)-(\eta_i z-x)(z-y)}{(w-x+\eta_i z- \eta_i y)^2} \\
 & = \frac{zw -xy}{(w-x+\eta_i z- \eta_i y)^2}
 \end{split}
\end{align}
and
\begin{align}
\begin{split}
\label{p1-6}
 \frac{\partial  r_2(\eta_i)}{\partial  \eta_i} & = \frac{w(z-y+\eta_i w- \eta_i x)-(\eta_i w-y)(w-x)}{(z-y+\eta_i w- \eta_i x)^2} \\
 & = \frac{zw -xy}{(z-y+\eta_i w- \eta_i x)^2}
 \end{split}
\end{align}
Thus, if $wz>xy$, then both $r_1(\eta_i)$ and $r_2(\eta_i)$ are increasing with $\eta_i$. Suppose $\pmb{\eta}'' \geq \pmb{\eta}'$, then for any $i$, $r_1(\eta''_i) \geq r_1(\eta'_i)$ and $r_2(\eta''_i) \geq r_2(\eta'_i)$. If $\mathbf{p^*} \to \mathbf{s^*}$ under $\pmb{\eta''}$, then ($\ref{p1-1}$) and ($\ref{p1-2}$) hold for $s_i^*=1$ and $s_i^*=0$, respectively. Then using similar steps as of the constructions of the compact set $\mathfrak{P}$ and mapping $H$ described above, it follows that $\mathbf{s}$ is a limiting BE under $\pmb{\eta}'$. This completes the proof of (1). 

The proof of (2) uses the fact that when $wz<xy$, both $r_1(\eta_i)$ and $r_2(\eta_i)$ are decreasing with $\eta_i$ from (\ref{p1-5}) and (\ref{p1-6}), then for any $i$, $r_1(\eta''_i) \leq r_1(\eta'_i)$ and $r_2(\eta''_i) \leq r_2(\eta'_i)$, the remaining steps are analogous.
\\
\par\noindent\textbf{Proof of Corollary \ref{cor1}}:

(1) From the proof of Proposition \ref{pro: eta}, let $\eta_i =0$, we have that 
\begin{align}
    r_1(\eta_i= 0)=\frac{\eta_i z -x}{w-x+\eta_i z- \eta_i y}= \frac{-x}{w-x}
\end{align}
and
\begin{align}
    r_2(\eta_i=0)=\frac{\eta_i w -y}{z-y+\eta_i w- \eta_i x}  =\frac{-y}{z-y}
\end{align}

The remaining steps follow from the arguments in the proof of Proposition \ref{pro: eta}. 

(2) This statement follows from statement (1) of Proposition \ref{pro: eta}.

(3)  From (\ref{p1-1}) and (\ref{p1-2}), $r_1(\eta_i) <1 $ and $r_2(\eta_i)<1$ for any $\eta_i$ based on the assumption that $w, z >0$. Thus, all agents coordinating on either action is a limiting BE as $\pmb{\psi} \to \pmb{0}$ or $\pmb{\lambda} \to \pmb{\infty}$. This in turn means that the number of BE is at least two. 
\\
\par\noindent\textbf{Proof of Proposition \ref{pro: influence}}:

Suppose that the initial condition is $\mathbf{q}(0)=\epsilon \Tilde{\mathbf{q}}(0)$ where $\epsilon >0$ and $||\Tilde{\mathbf{q}}(0)||=1$. Fix $\delta>0$, there exists some $\overline{t}$ independent of $\epsilon$ such that for any initial condition $\epsilon \Tilde{\mathbf{q}}(0)$ with $u_r^T  \Tilde{\mathbf{q}}(0)>\delta$, we have that $\hat{\mathbf{q}}(t) > \mathbf{0}$ whenever $t>\overline{t}$. In particular, $\hat{\mathbf{q}}(2\overline{t})>\pmb{0}$.

As $\mathbf{q}$ approaches $\pmb{0}$, $F(\mathbf{q})=\hat{F}(\mathbf{q})+ o(\mathbf{q}) $, which implies that $\hat{\mathbf{q}}(t) = \mathbf{q}(t) + o(\mathbf{q}(t))$ and $\mathbf{q}(t) = \hat{\mathbf{q}}(t)  + o(\hat{\mathbf{q}}(t))$. Note also when $\epsilon$ is close to zero, then $\mathbf{\hat{q}}(2\overline{t})$ approaches zero. Thus, there exists an $\epsilon$ such that $\mathbf{q}(2\overline{t}) = \hat{\mathbf{q}}(2\overline{t}) + o(\hat{\mathbf{q}}(2\overline{t}) ) >\pmb{0} $.

That is, at time $t=2\overline{t}$, the state is that all agents favour choice $D$ (i.e., $\mathbf{q}(2\overline{t}) > \pmb{0}$), indicating that the dynamical system will converge to the behavioural equilibrium in which all people favour action $D$ as the origin is unstable. The argument is the same for the case of coordination on the action $C$.

\clearpage
\bibliography{ref}

\end{document}